\newtheorem{theorem}{Theorem}
\newtheorem{lemma}{Lemma}
\theoremstyle{definition}
\newtheorem*{example}{Example}
\newcommand{\HD}{hD}                        
\newcommand{\R}{\mathbb R}					
\DeclareMathOperator{\dist}{d}				
\newcommand{\Sph}[1][d-1]{\mathbb{S}^{#1}}	
\newcommand{\half}{\mathcal H}
\DeclareMathOperator{\convOp}{conv}			          		
\newcommand{\conv}[1]{\convOp\left(#1\right)}
\newcommand{\aff}[1]{\llbracket #1 \rrbracket}           
\DeclareMathOperator{\intrOp}{int}			          		
\newcommand{\intr}[1]{\intrOp\left(#1\right)}
\newcommand{\aA}{{\upshape{\textbf{(A)}}}}
\newcommand{\aB}{{\upshape{\textbf{(B)}}}}
\newcommand{\aC}{{\upshape{\textbf{(A$^2$)}}}}
\newcommand{\aD}{{\upshape{\textbf{(A$^3$)}}}}
\newcommand{\aCmb}{{\upshape{\textbf{(C)}}}}
\newcommand{\pkg}[1]{{\normalfont\fontseries{b}\selectfont #1}}
\let\proglang=\textsf
\let\code=\texttt
\newcommand{\bigO}[1]{\mathcal O\left( #1 \right)}
\DeclarePairedDelimiter\floor{\lfloor}{\rfloor}
\title{On exact computation of Tukey depth central regions}
\author{V\'it Fojt\'ik$^{1}$}
\author{Petra Laketa$^{1}$}
\author{Pavlo Mozharovskyi$^{2}$}
\author{Stanislav Nagy$^{1}$}
\email{nagy@karlin.mff.cuni.cz}
\address{\hspace{-1em}$^1$
	Charles University, Prague,
	Faculty of Mathematics and Physics,
	Czech Republic}
\address{$^2$ LTCI, 
	Telecom Paris, Institut Polytechnique de Paris,
	France}
\date{\today}
\begin{document}

\begin{abstract}
The Tukey (or halfspace) depth extends nonparametric methods toward multivariate data. The multivariate analogues of the quantiles are the central regions of the Tukey depth, defined as sets of points in the $d$-dimensional space whose Tukey depth exceeds given thresholds $k$. We address the problem of fast and exact computation of those central regions. First, we analyse an efficient Algorithm~\aA{} from \citet{Liu_etal2019}, and prove that it yields exact results in dimension $d=2$, or for a low threshold $k$ in arbitrary dimension. We provide examples where Algorithm~\aA{} fails to recover the exact Tukey depth region for $d>2$, and propose a modification that is guaranteed to be exact. We express the problem of computing the exact central region in its dual formulation, and use that viewpoint to demonstrate that further substantial improvements to our algorithm are unlikely. An efficient \proglang{C++} implementation of our exact algorithm is freely available in the \proglang{R} package \pkg{TukeyRegion}.
\end{abstract}

\keywords{Computational geometry; Depth contours; Depth regions; Halfspace depth; \proglang{R} package \pkg{TukeyRegion}; Tukey depth}

\subjclass{62-08; 62H12; 62G05}

\maketitle



%
%
%

\section{Introduction: Tukey depth and its central regions}

The \emph{Tukey depth} (or \emph{halfspace depth}, or simply \emph{depth}) is a prominent method of nonparametric analysis of multivariate data. Proposed in 1975 by \citet{Tukey1975}, it is firmly established in nonparametric and robust statistics since the 1990s \cite{Donoho_Gasko1992}. For a point $x \in \R^d$ and a dataset $X = \left\{x_1, \dots, x_n\right\} \subset \R^d$, the Tukey depth\footnote{We consider only the depth for datasets, that is the sample depth. For general measures the depth is typically taken scaled into the interval $[0,1]$, which is obtained by dividing our expression for $\HD$ by $n$. For our purposes, the integer-valued version of the depth is more convenient to work with, but this minor difference is without loss of generality.} of $x$ with respect to $X$ is defined as the minimum number of data points in any halfspace that contains $x$ on its boundary 
    \[ \HD(x;X) = \min_{u \in \Sph} \#\left\{ i \in \left\{1, \dots, n\right\} \colon \left\langle x, u \right\rangle \leq \left\langle x_i, u \right\rangle \right\}. \] 
Here, $\Sph$ is the unit sphere in $\R^d$, and $u \in \Sph$ is the inner normal of the halfspace $\left\{ y \in \R^d \colon \left\langle x, u \right\rangle \leq \left\langle y, u \right\rangle \right\}$. The depth assesses the degree of centrality of $x$ with respect to the geometry of the data cloud $X$. The higher the depth is, the more ``centrally positioned" $x$ is within $X$. While immensely successful in applications \cite{Liu_etal1999, Rousseeuw_Ruts1999, Zuo_Serfling2000}, the exact and fast computation of the depth for $d > 2$ has been resolved only relatively recently \cite{Dyckerhoff_Mozharovskyi2016}. 

Perhaps even more important than the depth of a single point $x$ are the \emph{central regions} of the depth of $X$ at levels $k \geq 1$, defined as the upper level sets
    \[  \HD_k(X) = \left\{ x \in \R^d \colon \HD(x;X) \geq k \right\}.   \]
The central regions form a system of nested compact convex polytopes. For $k = 1$ we obtain the convex hull of $X$. The smallest non-empty set $\HD_k(X)$ is a generalisation of the median set to $\R^d$-valued data, and is called the \emph{Tukey} (or \emph{halfspace}) \emph{median} of $X$. In case when a unique point representing the median set is required, the barycentre of the median set is frequently singled out. The central regions of $X$ describe the shape of the dataset. Interestingly, they encode the complete information present in $X$, as there exist methods for reconstructing the data points from the central regions only \cite{Laketa_Nagy2021, Struyf_Rousseeuw1999}. The central regions are vital in many applications --- they are used in data visualisation, anomaly detection, classification, or the construction of multivariate boxplots, to give a few examples.

If the dataset $X$ is in general position,\footnote{A set of points in $\R^d$ is in general position if no $d+1$ of these points lie in a hyperplane.} each polytope $\HD_{k+1}(X)$ lies in the interior of the previous region $\HD_{k}(X)$, for $k\geq 1$ \cite[Lemma~6]{Struyf_Rousseeuw1999}. \emph{Throughout this paper, we assume that the points of $X$ are in general position.} That assumption is standard in the depth literature. It greatly facilitates both the analysis and computation. If the dataset $X$ is sampled from a distribution with a density, it is in general position almost surely.

We are concerned with the exact computation of the regions $\HD_k(X)$. Recently, an efficient algorithm was proposed in the literature \cite[Algorithm~2]{Liu_etal2019}. We refer that program Algorithm~\aA{} for brevity. Algorithm~\aA{} cleverly combines the ideas of projecting the data into two-dimensional subspaces, and a consecutive breadth-first search strategy along those hyperplanes determined by $d$ data points that may determine a piece of the boundary of the region $\HD_k(X)$. It is fast, and possible to be used also for data of dimension $d>2$. Despite not being proved theoretically, ample empirical evidence presented in \cite{Liu_etal2019} suggested that the algorithm may give the exact Tukey depth regions $\HD_k(X)$ for any $d \geq 1$ and $k \geq 1$.

We begin in Section~\ref{sec:A} by analysing the exactness of Algorithm~\aA{} from a theoretical perspective. In Sections~\ref{sec:d2} and~\ref{sec:k2} we prove that Algorithm~\aA{} does indeed give exact results in dimension $d=2$ for any $k \geq 1$, and in any dimension $d > 2$ for $k=1,2$. In Section~\ref{sec:not exact} we proceed with a surprising negative result. We provide a dataset of $n = 12$ points in $\R^3$ in general position where Algorithm~\aA{} fails to recover the Tukey depth central region, meaning that Algorithm~\aA{} is not exact in general.

Based on our observations, in Section~\ref{sec:exact} we modify Algorithm~\aA{}, and prove that our new version called Algorithm~\aB{} recovers the central regions for any dataset of points in general position, for any dimension $d \geq 1$ and any $k \geq 1$. An extensive simulation study presented in Section~\ref{sec:simulation} highlights that despite Algorithm~\aB{} is more complex than Algorithm~\aA{}, in the task of computing multiple central regions, the two procedures are on par in terms of speed. In particular, Algorithm~\aB{} is well suited for the exact computation of the complete collection of central regions of $X$, including the Tukey median set.

In the concluding Section~\ref{sec:dual} we recast our results in view of the so-called dual graph of $X$, a useful tool for visualisation and diagnostics for the central regions. Using dual graphs, we demonstrate that none of the several appealing simplifications of our Algorithm~\aB{} cannot guarantee exactness. It therefore appears unlikely that a procedure substantially simpler than our Algorithm~\aB{} would be able to recover the exact Tukey depth regions. The extensive technical proofs of our main results are gathered in the Appendix.

Our proofs employ notions from convex geometry, and rely heavily on the polarity theory for convex polytopes. After defining the essential notations in Section~\ref{sec:notations}, we therefore begin our exposition by a brief overview of the necessary theory on polar polytopes in Section~\ref{sec:polarity}.

\subsection{Notations}  \label{sec:notations}

The boundary of a set $A \subset \R^d$ is denoted by $\partial A$, and its interior by $\intr{A} = A \setminus (\partial A)$. We write $\conv{A}$ for the convex hull of $A$, and $\aff{A}$ for its affine hull.\footnote{Convex hull of $A$ is defined as the intersection of all convex sets that contain $A$; its affine hull is the intersection of all translations of vector subspaces (that is, affine subspaces of $\R^d$) that contain $A$.} In most situations this notation will be applied to a finite set $A = \left\{ a_1, \dots, a_m \right\}$, where we write also $\conv{a_1, \dots, a_m}$ for the convex hull of these points, and $\aff{a_1, \dots, a_m}$ for their affine hull. For example, for $a_1 \ne a_2 \in \R^d$, $\conv{a_1,a_2}$ and $\aff{a_1,a_2}$ stand for the line segment and the infinite line delimited by $a_1$ and $a_2$, respectively.

Denote by $X$ a set of $n > d+1$ data points in $\R^d$ in general position. A \emph{ridge} is any subset of $d-1$ points from $X$. We say that a hyperplane is \emph{observational} if it is determined by the affine hull of $d$ points from $X$, meaning that it is the unique hyperplane that contains all those $d$ points. A closed halfspace whose boundary is an observational hyperplane is called an \emph{observational halfspace}. An observational halfspace $H$ is \emph{relevant (at level $k$)} if the complementary open halfspace $\R^d \setminus H$ contains exactly $k-1$ points from $X$; we write $\half(k)$ for the set of all relevant halfspaces at level $k$.  A \emph{relevant hyperplane (at level $k$)} is the boundary hyperplane $\partial H$ of a relevant halfspace $H\in\half(k)$. We also say that a relevant halfspace $H \in \half(k)$ (or its boundary $\partial H$) \emph{cuts off} $k-1$ points from $X$.

Two halfspaces $H,H'\in \half(k)$ are \emph{(mutually) reachable} (in $\half(k)$) if \begin{enumerate*}[label=(\roman*)] \item they are \emph{neighbouring}, meaning that their boundaries contain the same ridge, or \item there exists $H''\in\half(k)$ that is reachable in $\half(k)$ from both $H$ and $H'$.\end{enumerate*} Note that this definition is given recursively --- $H$ and $H'$ are reachable if and only if there exists a finite sequence of halfspaces $\left\{ H_j \right\}_{j=1}^J \subseteq \half(k)$ such that $H_1 = H$, $H_J = H'$, and for each $j = 1, \dots, J-1$ the boundaries of $H_j$ and $H_{j+1}$ share a ridge. Starting from a given ridge $I \subset X$, the search strategy through ridges employed in our algorithms finds all relevant halfspaces $H' \in \half(k)$ that are reachable from (any) halfspace $H \in \half(k)$ such that $I \subset \partial H$. The collection of all these halfspaces reachable from $H \in \half(k)$ (or equivalently reachable from the ridge $I$) will be called an \emph{orbit} of $H$ (or $I$) in $\half(k)$. Mathematically speaking, the orbit of $H\in\half(k)$ is the transitive closure of the binary relation of halfspaces in $\half(k)$ being neighbouring to $H$. The orbits in $\half(k)$ partition $\half(k)$ into equivalence classes.

\subsection*{The rationale of the algorithms} Since $\HD_k(X)$ can be defined as the intersection of all the elements of $\half(k)$ \cite[Proposition~6]{Rousseeuw_Ruts1999}, the problem of finding the central region of $X$ at level $k$ reduces to the task of identifying all relevant halfspaces $\half(k)$. In the sequel, we are therefore concerned with algorithms for finding all halfspaces from $\half(k)$, or equivalently, all $d$-tuples of points from $X$ whose affine hulls cut off exactly $k-1$ data points from $X$.

\subsection{Preliminaries: Polar polytopes} \label{sec:polarity}

We use duality considerations from convex geometry \cite[Section~2.4]{Schneider2014}. First, we recall basic definitions and facts about polar polytopes. A polytope $P \subset \R^d$ is the convex hull of a finite number of points in $\R^d$. In this work we deal only with full-dimensional polytopes, that is polytopes whose interior is non-empty. A face of $P$ is a convex subset $F \subseteq P$ that satisfies that $x, y \in P$ and $(x+y)/2 \in F$ implies $x, y \in F$. The single point faces $F$ of $P$ are called vertices of $P$, the one-dimensional faces are the edges of $P$. A $(d-1)$-dimensional face of $P$ is a facet of $P$. For a polytope $P \subset \R^d$ that contains the origin in its interior, the \emph{polar polytope} of $P$ is defined as \cite[Section~2.1]{Schneider2014}
	\[	P^\circ = \left\{ x \in \R^d \colon \left\langle x, y \right\rangle \leq 1 \mbox{ for all }y \in P \right\}.	\]
Denote by $F_1, \dots, F_m$ all the facets of $P$. The \emph{conjugate face} of $F_j$, $j=1,\dots,m$, is
	\begin{equation}	\label{conjugate face}
	\hat{F}_j = \left\{ x \in P^\circ \colon \left\langle x, y \right\rangle = 1 \mbox{ for all }y \in F_j \right\}.	
	\end{equation}
By \cite[formula~(2.28)]{Schneider2014} we know that each $\hat{F}_j$ is a point in $\R^d$, and \cite[Lemma~2.4.5]{Schneider2014} gives that $P^\circ = \conv{\hat{F}_1, \dots, \hat{F}_m}$. Thus, the vertices $\hat{F}_j$ of $P^\circ$ correspond to the outer normals of the facets $F_j$, $j=1,\dots,m$, in the sense that we can write
	\begin{equation}	\label{P by duality}
	P = \bigcap_{j=1}^m \left\{ x \in \R^d \colon \left\langle x, \hat{F}_j \right\rangle \leq 1 \right\}.	
	\end{equation}
A pair of vertices $\hat{F}_j$ and $\hat{F}_k$ of $P^\circ$ is joined by an edge on the boundary of $P^\circ$ if and only if the facets $F_j$ and $F_k$ share a $(d-2)$-dimensional face $F$ of $P$ \cite[Theorem~2.4.9 and formula~(2.28)]{Schneider2014}, and in particular $F$ is then the convex hull of some $d-1$ vertices of $P$. In what follows, this theory will be applied in the situation when the vertices of $P$ are a subset of the dataset $X$, and both $F_j$ and $F_k$ determine boundaries of halfspaces from $\half(k)$. In that case, in terms of reachability introduced in Section~\ref{sec:notations}, we see that $\hat{F}_j$ and $\hat{F}_k$ are joined by an edge if and only if the affine hulls of $F_j$ and $F_k$ are neighbouring in $\half(k)$.

\section{Theoretical analysis of Algorithm~{\aA}}    \label{sec:A}

For a dataset $X \subset \R^d$ and a level $k \geq 1$, Algorithm~\aA{} for finding the central region $\HD_k(X)$ is based on the following general procedure which we call \textbf{\textsf{RidgeSearch}}. This scheme encompasses a whole family of algorithms in the spirit of both Algorithms~1 and~2 from \cite{Liu_etal2019}. The indicated \textbf{Steps 1--8} refer to the description of Algorithm~2 from \cite{Liu_etal2019}.


\renewcommand{\algorithmcfname}{Main algorithm \textsf{RidgeSearch}}
\renewcommand{\thealgocf}{}

   \begin{algorithm}
	\caption{The search procedure through ridges in the boundaries of $H \in \half(k)$ from \cite{Liu_etal2019}.}
	\begin{enumerate}[label=\textbf{(S$_{\arabic*}$)}, ref=\upshape{\textbf{(S$_{\arabic*}$)}}]
		\item \label{S1} Initialisation (generalisation of \textbf{Steps~1} and~\textbf{2} from~\cite{Liu_etal2019}):
		\begin{itemize}
			\item[\ding{228}] A queue $\mathcal Q$ of ridges of $X$ is initialised, using an algorithm-specific rule.
			\item[\ding{228}] A set of found relevant halfspaces $\mathcal H_k$ is initialised to be empty.
		\end{itemize} 
		\item \label{S2} The main loop runs through the queue of ridges $\mathcal Q$ (\textbf{Step~3} from~\cite{Liu_etal2019}): 
		
		\textbf{For} each $I \in \mathcal Q$ \textbf{do} (\textbf{Step~4} from~\cite{Liu_etal2019})
		\begin{itemize}
			\item[\ding{228}] Find all relevant halfspaces reachable from $I$ in $\half(k)$, i.e. the orbit of $I$ in $\half(k)$. This is performed by a search strategy where first, all halfspaces in $\half(k)$ containing $I$ are found, and then this process is iterated for all ridges of those halfspaces. The search finishes when the complete orbit $O$ of the ridge $I$ in $\half(k)$ is found.
			\item[\ding{228}] Append the whole orbit $O$ of the ridge $I$ to the set $\mathcal H_k$.
		\end{itemize}
		\item \label{S3} In the final \textbf{Steps~5--8} from~\cite{Liu_etal2019}, the intersection of all halfspaces from $\mathcal H_k$ is found. This is the output of the algorithm, being an estimate of the central region $\HD_k(X)$. It is hoped that in $\mathcal H_k$ we recovered all orbits, or equivalently the whole set $\half(k)$. In that case $\HD_k(X) = \bigcap \mathcal H_k$.
	\end{enumerate}
\end{algorithm}
    
The crucial part of \textbf{\textsf{RidgeSearch}} is the selection of the initial set of ridges $\mathcal Q$ at stage~\ref{S1}. This stage is the one where the particular instances of the procedure \textbf{\textsf{RidgeSearch}} differ. 

Trivially, if all possible $\binom{n}{d-1}$ ridges of the dataset $X$ are included in $\mathcal Q$, \textbf{\textsf{RidgeSearch}} yields an exact solution. This was observed already in \cite{Liu_etal2019}, where the last algorithm was presented as Algorithm~1, and is also called the \emph{combinatorial algorithm}. For brevity, we call this exact program Algorithm~\aCmb{} (for ``combinatorial''). It is currently the only relatively fast procedure for the computation of the central regions with an exactness guarantee. Nevertheless, as argued already in \cite{Liu_etal2019}, the initial selection of all $\binom{n}{d-1}$ ridges in $\mathcal Q$ makes Algorithm~\aCmb{} slow for many setups. 

In contrast, in the original fast Algorithm~\aA{} from \cite{Liu_etal2019}, the initial set of ridges $\mathcal Q$ at stage~\ref{S1} is chosen according to the following heuristic (\textbf{Step~2} in \cite{Liu_etal2019}):
    \begin{enumerate}[label=\textbf{(A$_{\arabic*}$)},label=\upshape{\textbf{(A$_{\arabic*}$)}}]
        \item \label{A1} A single ridge $I$ of $d-1$ points on the boundary of the convex hull of $X$ is found. 
        \item \label{A2} All\footnote{At this step we slightly simplify Algorithm~2 from \cite{Liu_etal2019}. In the original version, only two relevant halfspaces $H \in \half(k)$ of this type are found in \textbf{Step~2(d)} \cite[p.~686]{Liu_etal2019}. Of course, our inclusion of (possibly) more than two relevant halfspaces in~\ref{A2} makes Algorithm~\aA{} to search through more ridges. Thus, if Algorithm~2 from \cite{Liu_etal2019} is exact, then so must be our Algorithm~\aA{}. This difference is of no importance for our exposition, and does not alter any of our conclusions.} relevant halfspaces $H \in \half(k)$ that contain $I$ in their boundary hyperplanes are obtained, and all ridges determined by points of $X$ in those hyperplanes are placed into $\mathcal Q$. 
        \item \label{A3} Finally, all ridges formed by $d-2$ points of $I$ and a single point cut off by any halfspace $H \in \half(k)$ from step~\ref{A2} are added to $\mathcal Q$.
    \end{enumerate}
Algorithm~\aA{} therefore involves steps~\ref{S1}--\ref{S3} of \textbf{\textsf{RidgeSearch}}, with the initialisation~\ref{A1}--\ref{A3} at step~\ref{S1}. For a more detailed description of Algorithm~\aA{} we refer to \cite{Liu_etal2019}. Here we provide only a small motivating example for $d=2$ that is summarized in Figure~\ref{figure:motivation}. In this example we see that for any $k \geq 1$, Algorithm~\aA{} gives an exact central region. We will show below that this is not a coincidence, and for $d=2$ Algorithm~\aA{} is always exact. Another example of Algorithm~\aA{} will be given for $d=3$ in Section~\ref{sec:not exact}.

\begin{figure}
    \centering
    \includegraphics[width=.35\textwidth]{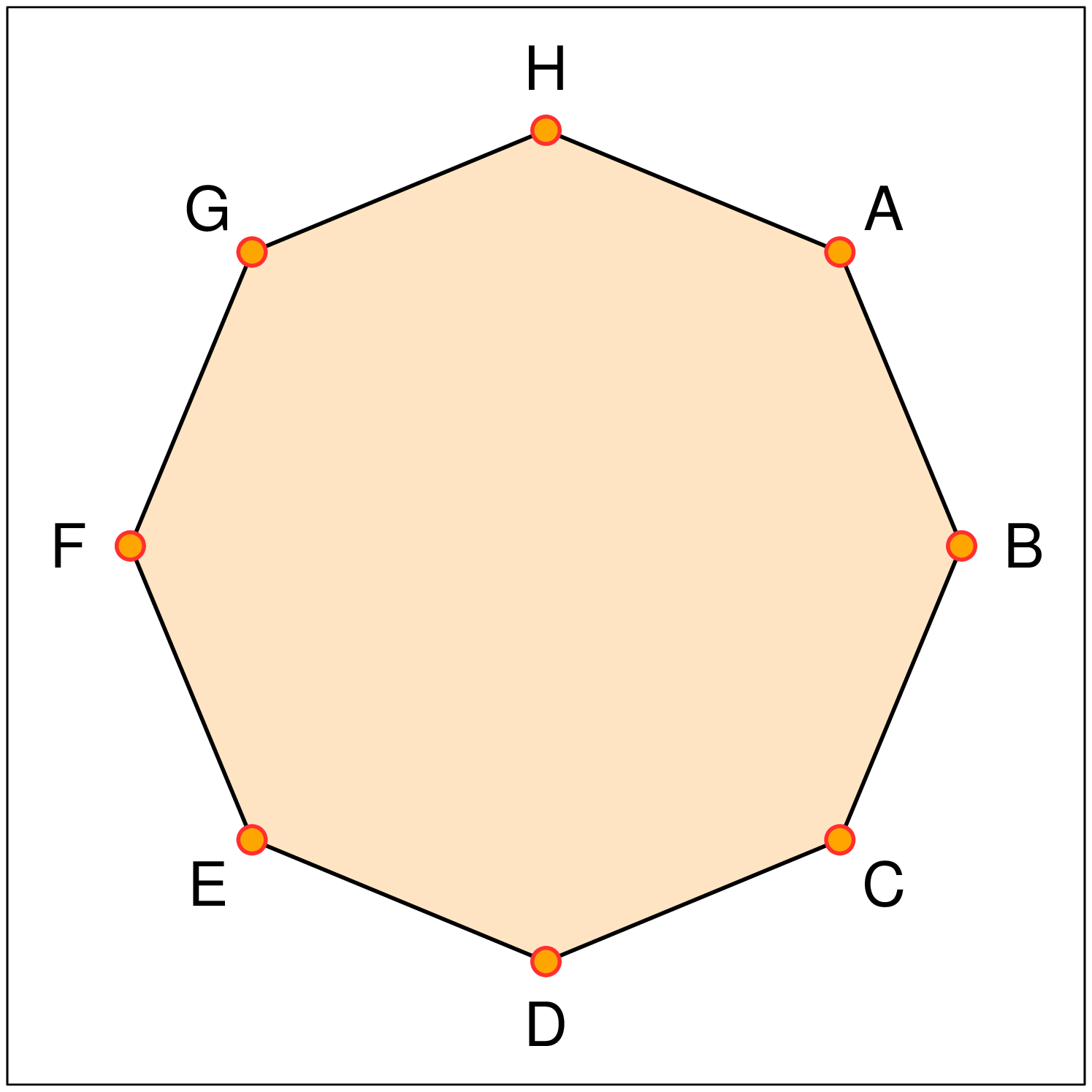} \quad \includegraphics[width=.35\textwidth]{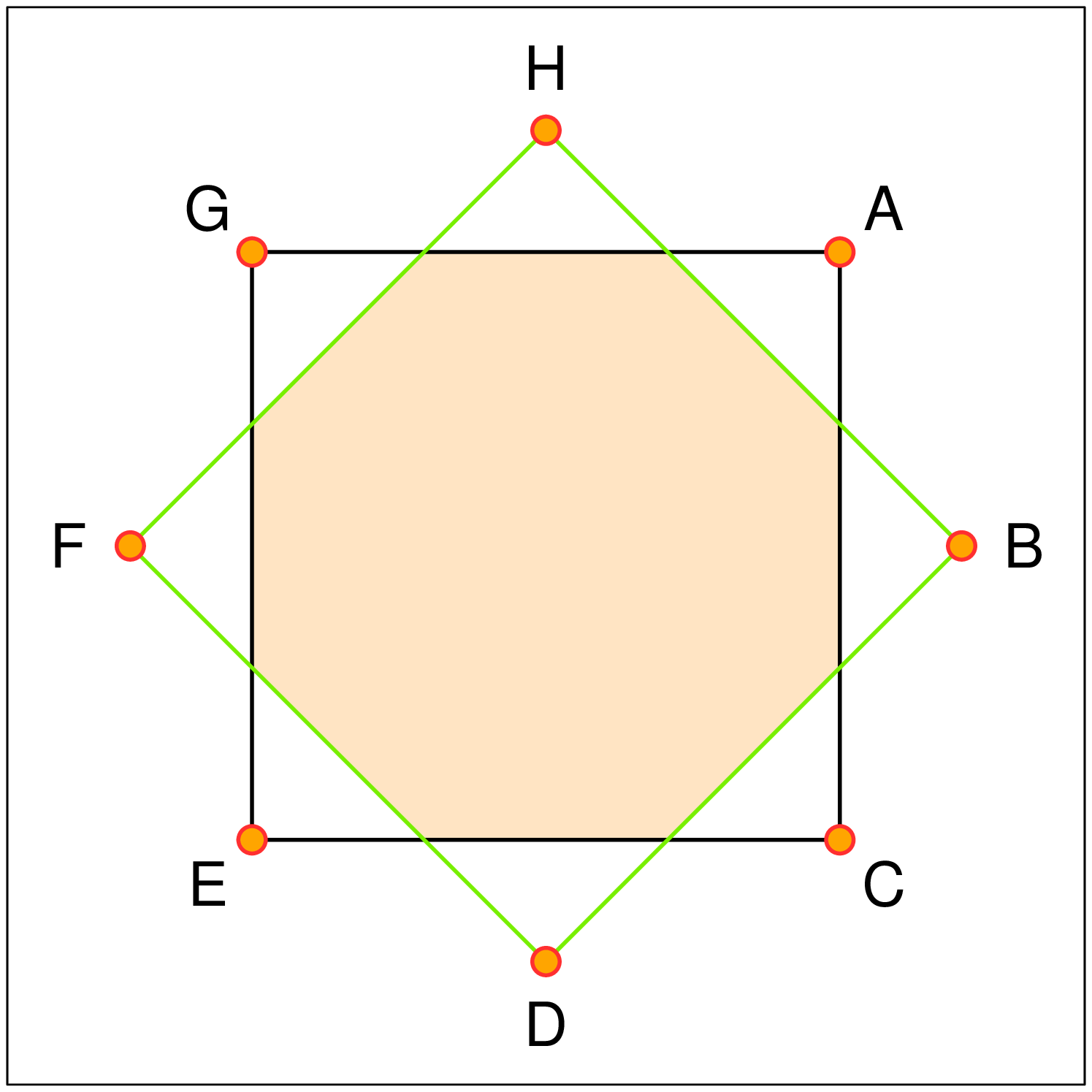} \\[1em]
    \includegraphics[width=.35\textwidth]{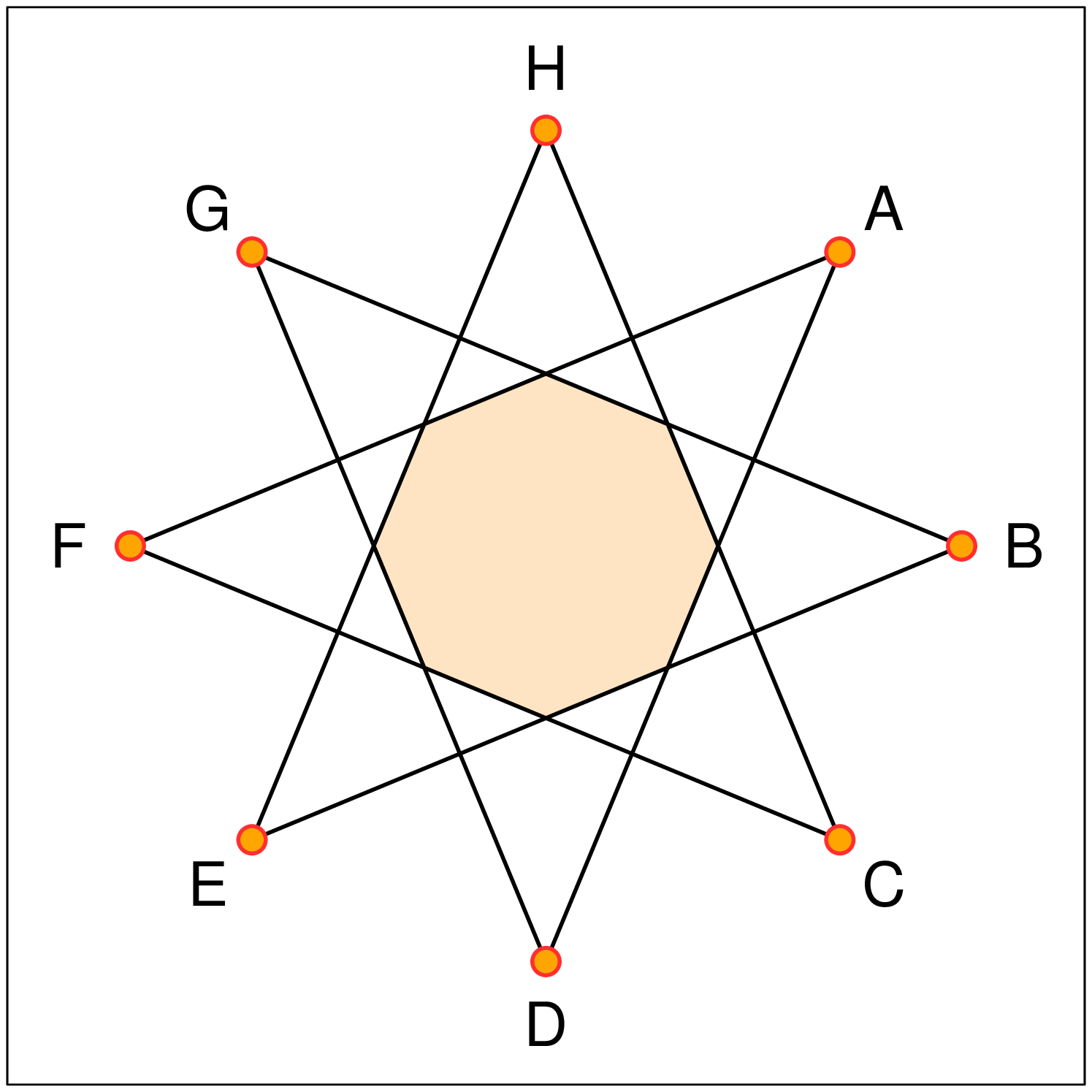} \quad \includegraphics[width=.35\textwidth]{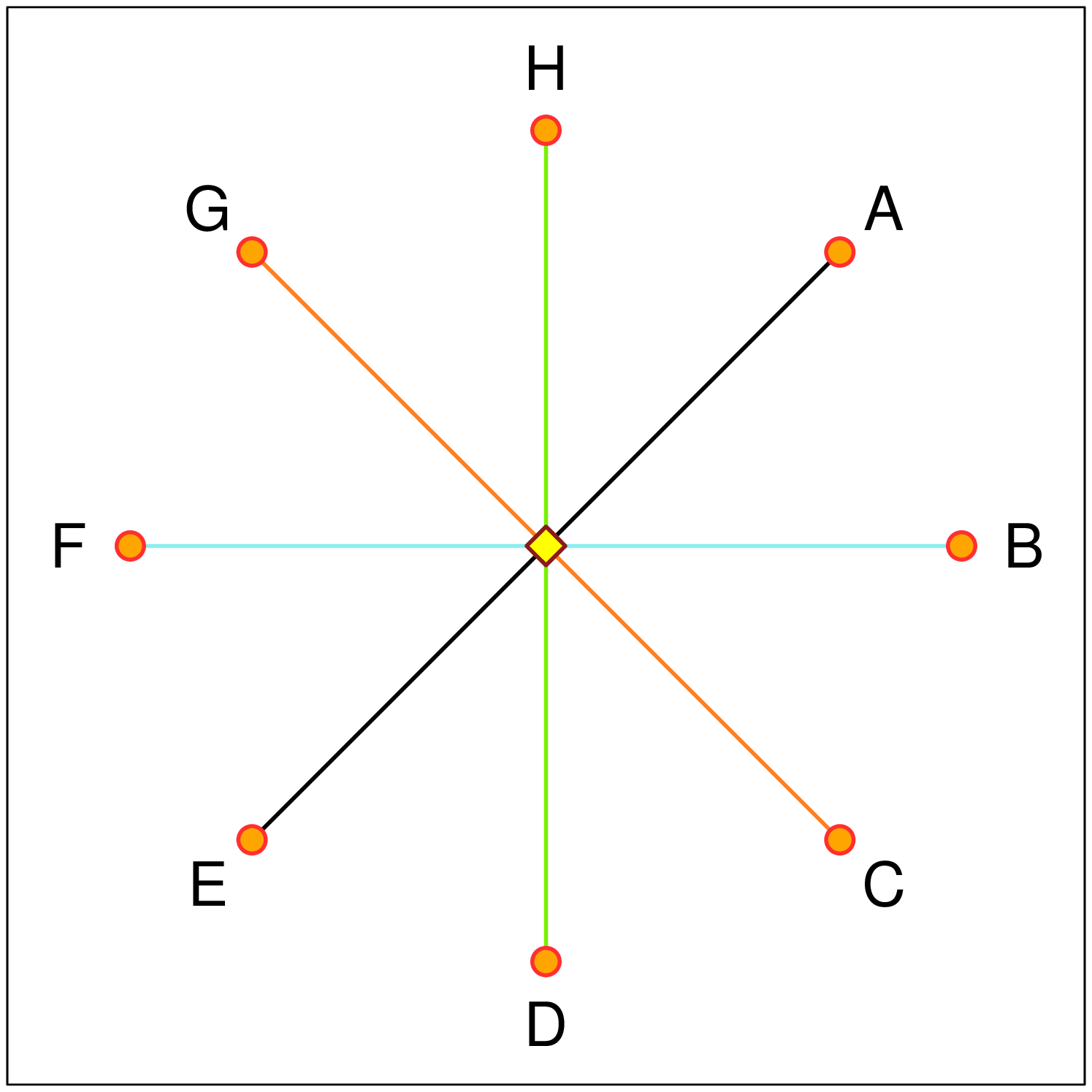}    
    \caption{A motivating example: A dataset $X$ of $n = 8$ points $A$--$H$ in $\R^2$, the relevant lines forming boundaries of halfplanes from $\half(k)$ (thick lines), and the central regions (shaded regions) for $k=1$ (top left), $k=2$ (top right), $k=3$ (bottom left), and $k=4$ (bottom right). For $k=1$ and $k=3$ all ridges (that is, data points for $d=2$) lie in a single orbit, and \textbf{\textsf{RidgeSearch}} is exact with any single initial ridge in $\mathcal Q$. For $k=2$ there are two orbits --- starting from the ridge $A$, the only two relevant halfplanes from $\half(2)$ that contain $A$ in their boundary are those given by $\aff{A,C}$ and $\aff{A,G}$. The search initialised at $A$ in~\ref{S2} recovers its orbit $\left\{A,C,E,G\right\}$. A second orbit in $\half(2)$ is given by $\left\{B,D,F,H\right\}$. Using Algorithm~\aA{} we initialise at a ridge (say) $A$ in~\ref{A1}, then include in $\mathcal Q$ also ridges $C$ and $G$ in~\ref{A2}, and finally take into $\mathcal Q$ also ridges $B$ and $H$ in~\ref{A3}. Thus, Algorithm~\aA{} gives an exact result. For $k=4$ we have four orbits in $\half(4)$, given by $\left\{A,E\right\}$, $\left\{B,F\right\}$, $\left\{C,G\right\}$, and $\left\{D,H\right\}$, respectively. The median set $\HD_4(X)$ is the single point (yellow diamond) in the centre of the figure.}
    \label{figure:motivation}
\end{figure}

In \cite[Section~4.1]{Liu_etal2019}, an extensive simulation study was performed to demonstrate that in tens of thousands of simulated runs, Algorithm~\aA{} always recovered the exact central region. In what follows we validate some of those positive results from a theoretical perspective. Afterwards, we construct an example showing that Algorithm~\aA{} may fail for $d>2$.

\subsection{Algorithm~\aA{} is exact for \texorpdfstring{$d=2$}{d=2}}    \label{sec:d2}

Assume for a moment that $d = 2$ and $1 \leq k < n/2$.\footnote{The extreme cases $k \geq n/2$ are not interesting, because clearly $\HD_k(X) = \emptyset$ if $k > n/2$ \cite[see e.g.][Theorem~1]{Liu_etal2020}. Furthermore, for $n$ even and $k = n/2$, if the set $\HD_{n/2}(X)$ is non-empty, then $X$ is a halfspace symmetric \cite{Zuo_Serfling2000c} configuration of points. By \cite[Proposition~1]{Liu_etal2020} for $d>2$ this is impossible. For $d=2$ and the non-trivial case $n > 2$ this is possible only for $\HD_{n/2}(X)$ a single point set \cite[Theorem~3.1]{Zuo_Serfling2000c}, a situation which is not covered by \textbf{\textsf{RidgeSearch}}. In fact, it can be shown that for $X$ sampled from an absolutely continuous distribution in dimension $d = 2$, $\HD_{n/2}(X)$ is either empty or a sample point from $X$, with probability one \cite{Pokorny_etal2021}.} Recall that for $x \neq y \in \R^2$ we write $\aff{x, y}$ for the unique line passing through $x$ and $y$. In two dimensions, a halfspace is a halfplane, hyperplanes are lines (determined by two points) and a ridge is just one point. 

\begin{theorem}
	\label{thm:orbitspassthrough}
Let $H \in \half(k)$ be a relevant halfplane cutting off points $U \subseteq X$ from $X$. Then for all orbits $O$ of halfplanes from $\half(k)$, either $H \in O$, or there exists $H^\prime \in O$ and $x_l \in U$ such that $L^\prime = \partial H^\prime$ passes through $x_l$. That is, every orbit in $\R^2$ contains $H$, or a halfplane whose boundary passes through a point cut off by $H$.
\end{theorem}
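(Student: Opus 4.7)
The plan is to argue by a case analysis on the two data points $c, d$ defining $\partial H'$ for an arbitrary $H' \in O$, reducing to a rotational-sweep argument in the nontrivial case. I fix coordinates so $\partial H$ is the $x$-axis, with $U$ strictly above and $V := X \setminus (U \cup \{a,b\})$ strictly below. Three cases arise:
\begin{enumerate*}[label=(\roman*)]
\item $\{c,d\} \cap U \neq \emptyset$, which yields the claim immediately with $H' = H''$ for some fixed $H'' \in O$ we start from;
\item $\{c,d\} \cap \{a,b\} \neq \emptyset$, in which case $\partial H'$ and $\partial H$ share a ridge, so $H'$ and $H$ are neighbouring and $H \in O$, contradicting $H \notin O$;
\item $c, d \in V$, which is the substantive case.
\end{enumerate*}

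In case (iii) I fix the ridge $\{c\}$ and consider the set $\mathcal S_c \subseteq \half(k)$ of relevant halfplanes passing through $c$. Every $H^* \in \mathcal S_c$ shares the ridge $\{c\}$ with $H'$, so $\mathcal S_c \subseteq O$. The argument then rests on an auxiliary claim: \emph{for every $c \in V$ with $\mathcal S_c \neq \emptyset$, some $H^* \in \mathcal S_c$ has its second defining point $x^*$ in $U \cup \{a,b\}$}. Granting this, $H^* \in O$, and either $x^* \in U$ (the theorem follows) or $x^* \in \{a,b\}$ (reproducing the contradiction from (ii)).

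To establish the auxiliary claim I would parametrize oriented lines through $c$ by angle $\theta \in [0, 2\pi)$ and study the step function $r(\theta) = |\{x_i \in X \setminus \{c\} : x_i \text{ strictly to the right of } L_\theta\}|$. Each $x_i \neq c$ contributes a $+1$ jump at $\theta = \arg(x_i - c)$ and a $-1$ jump at $\theta = \arg(x_i - c) + \pi$. An angle $\theta$ produces a relevant halfplane through $c$ exactly when $L_\theta$ passes through a data point $x_i$ and the strict-right count equals $k-1$ or $n-k-1$. Since $c \in V$, the $k+1$ directions $\arg(u - c)$ for $u \in U \cup \{a,b\}$ all lie in the upper arc $(0, \pi)$, while $V$-point transitions are interleaved. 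Enumerating the points of $U \cup \{a,b\}$ in angular order as $u_1, \dots, u_{k+1}$ and letting $m_j$ count the $V$-points strictly to the right of $\aff{c, u_j}$, the relevance of $\aff{c, u_j}$ reduces to $j + m_j = k$ (or an analogous equation for the opposite orientation). A discrete intermediate-value argument along $j = 1, \dots, k+1$, anchored by the presence of at least one relevant halfplane through $c$ somewhere in the sweep, should then yield a valid $j$.

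The principal obstacle is precisely this last step: the $V$-point contributions $m_{j+1} - m_j$ can in principle be any integer, so $j + m_j$ might skip over $k$ on consecutive indices. Resolving this requires exploiting two global constraints — that the $+1$ and $-1$ jumps of $r$ over $[0, 2\pi)$ cancel, and that the $k+1$ directions from $U \cup \{a,b\}$ are angularly contiguous on the side of $c$ opposite to the cut-off side of $\partial H$ — and then combining them with the anchoring relevant halfplane in $\mathcal S_c$ to force at least one match at a $U \cup \{a,b\}$-transition rather than purely at $V$-transitions.
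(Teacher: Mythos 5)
Your case split (i)--(iii) is sound, and the observation that $\mathcal S_c \subseteq O$ whenever $c$ is a defining point of a halfplane $H'\in O$ is correct. However, the auxiliary claim on which the whole argument rests --- that for every $c\in V$ with $\mathcal S_c\neq\emptyset$ some relevant halfplane in $\mathcal S_c$ has its second defining point in $U\cup\{a,b\}$ --- is \emph{false}, so the difficulty you flag at the end cannot be resolved by any refinement of a rotational sweep anchored at a single fixed pivot.

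A small counterexample: take $d=2$, $n=6$, $k=2$ and
\[
a=(-1,0),\quad b=(1,0),\quad u_1=(0,1),\quad c=(0,-3),\quad v_1=\left(-\tfrac12,-1\right),\quad v_2=\left(\tfrac12,-1\right).
\]
The line $\aff{a,b}$ is the $x$-axis and cuts off $U=\{u_1\}$, so $H$ (the lower halfplane) lies in $\half(2)$ and $V=\{c,v_1,v_2\}$. One checks directly that $\aff{c,v_1}$ and $\aff{c,v_2}$ are relevant (cutting off $a$ and $b$, respectively), while $\aff{c,a}$ and $\aff{c,b}$ leave all four remaining points on one side and $\aff{c,u_1}$ splits them $2$--$2$, so none of these are relevant. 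Hence $\mathcal S_c$ is nonempty but every member has both defining points in $V$. The theorem nonetheless holds here: $\aff{v_1,a}$ is relevant, shares the ridge $\{v_1\}$ with $\aff{c,v_1}$ and the ridge $\{a\}$ with $\aff{a,b}$, so $H$ lies in the orbit of $\aff{c,v_1}$ --- but reaching it requires moving the pivot from $c$ to $v_1$, precisely the step your single-pivot sweep disallows.

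This is exactly the issue that the paper's Lemma~\ref{lm:closerpoint} and its iteration are built to handle. The lemma rotates about the \emph{closer} defining point of the current relevant line and asks only for a new relevant line whose outer normal lies strictly between the current normal and $\nu_L$ (for $L=\partial H$); the new second defining point is not required to land in $U\cup\{a,b\}$. The main proof then re-applies the lemma to the new line, allowing the pivot to change at every step, and termination follows from the strictly decreasing angle to $\nu_L$ together with finiteness of $X$. Your proposal compresses this multi-step, pivot-changing descent into a single circular sweep around one $c$, and the obstruction you honestly describe in the final paragraph (that $j+m_j$ may skip over $k$) is a genuine failure of the approach, not a technicality that global jump-cancellation constraints can fix.
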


The detailed proof of Theorem~\ref{thm:orbitspassthrough} is given in Section~\ref{sec:orbits} in the Appendix. As a direct consequence, we obtain our first main result.

\begin{theorem}
\label{theorem:d2}
Algorithm~{\aA{}} in dimension $d=2$ finds all relevant hyperplanes at level $1 \leq k < n/2$ in any dataset $X$ in general position of size $n$. In other words, with the set of initial ridges chosen using the heuristic~\ref{A1}--\ref{A3}, we have $\mathcal H_k = \half(k)$. In particular, Algorithm~{\aA{}} is exact for $d=2$.
\end{theorem}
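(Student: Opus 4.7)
The plan is to reduce Theorem~\ref{theorem:d2} to Theorem~\ref{thm:orbitspassthrough} by showing that the initial queue $\mathcal Q$ produced by \ref{A1}--\ref{A3} intersects every orbit of $\half(k)$, since the loop \ref{S2} of \textbf{\textsf{RidgeSearch}} then recovers each orbit, and thus all of $\half(k)$.

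First, I would unpack what \ref{A1}--\ref{A3} produce in dimension $d=2$. A ridge is a single data point, so \ref{A1} chooses a vertex $I$ of $\conv{X}$; the observational lines through $I$ are the lines $\aff{I,x_j}$ for $x_j \in X\setminus\{I\}$. In \ref{A2} we collect every relevant halfplane whose boundary line passes through $I$, and insert into $\mathcal Q$ the other data point on each such line. In \ref{A3} we insert into $\mathcal Q$ every data point cut off by any of those halfplanes (since $d-2=0$, no point of $I$ is combined). Hence $\mathcal Q$ contains $I$ and, for every $H\in\half(k)$ whose boundary passes through $I$, both the second data point on $\partial H$ and every point cut off by $H$.

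Next, I would verify that there is at least one relevant halfplane $H\in\half(k)$ whose boundary passes through $I$. Because $I$ is a vertex of $\conv{X}$, the remaining $n-1$ points of $X$ can be ordered angularly around $I$ within some open halfplane with $I$ on its bounding line. General position and $1\le k<n/2$ then let us pick the $k$-th point $x_{(k)}$ in this angular order; the observational line $\aff{I,x_{(k)}}$ cuts off exactly $k-1$ points on one side, hence gives a relevant halfplane $H$ through $I$. Write $U\subseteq X$ for the set of $k-1$ points cut off by $H$; by \ref{A3}, $U\subseteq \mathcal Q$.

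Now apply Theorem~\ref{thm:orbitspassthrough} to this particular $H$: for every orbit $O$ of $\half(k)$, either $H\in O$, or there is $H'\in O$ whose boundary line passes through some $x_l\in U$. In the first case, the ridge $I\in\mathcal Q$ lies on the boundary of a halfplane in $O$; in the second, the ridge $x_l\in\mathcal Q$ lies on the boundary of a halfplane in $O$. Either way, $\mathcal Q$ meets $O$, so the search through ridges in \ref{S2} reaches, and exhausts, every orbit of $\half(k)$. Consequently $\mathcal H_k=\half(k)$, and the intersection computed in \ref{S3} equals $\HD_k(X)$ by \cite[Proposition~6]{Rousseeuw_Ruts1999}.

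The only real subtlety, which I do not anticipate as a genuine obstacle, is the existence claim in the second step: we must be certain that $I$ is actually incident to a relevant halfplane at level $k$, otherwise \ref{A2} and \ref{A3} are vacuous and $\mathcal Q=\{I\}$ is not enough. The angular-ordering argument around the convex-hull vertex $I$, combined with general position and $k<n/2$, resolves this cleanly; the rest of the argument is just bookkeeping on top of Theorem~\ref{thm:orbitspassthrough}.
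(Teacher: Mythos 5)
Your proposal is correct and takes essentially the same route as the paper: reduce Theorem~\ref{theorem:d2} to Theorem~\ref{thm:orbitspassthrough} by checking that the initial queue $\mathcal Q$ produced by \ref{A1}--\ref{A3} meets every orbit. The paper's published proof is a three-sentence compression of exactly this reasoning; the only substantive thing you add (and it is a legitimate gap the paper glosses over) is the explicit verification, via the angular ordering around the hull vertex $I$ together with $k<n/2$, that at least one relevant halfplane at level $k$ actually passes through $I$, so that \ref{A2} and \ref{A3} are not vacuous.
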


\begin{proof}
Algorithm~\aA{} starts with at least one relevant hyperplane (line) $L$ in~\ref{A2} and all relevant hyperplanes that pass through points that are cut off by $L$ in~\ref{A3}. Then, it generates the orbits of these hyperplanes. By Theorem~\ref{thm:orbitspassthrough}, this includes all orbits in $\half(k)$.
\end{proof}

As a consequence of the proof of Theorem~\ref{theorem:d2} we can reduce the initial set of ridges $\mathcal Q$ in Algorithm~\aA{} to only $k$ points (ridges) for $d=2$, without losing exactness. For any relevant halfplane $H \in \half(k)$, take the $k-1$ points cut off by $H$ and one of the two points on the boundary of $H$ as the initial set of ridges. Then, Theorem \ref{thm:orbitspassthrough} shows that the procedure \textbf{\textsf{RidgeSearch}} finds all relevant halfspaces and hence gives the exact central region $\HD_k(X)$. 




%
%
%
%
%

\subsection{Algorithm~\aA{} is exact for \texorpdfstring{$k=1,2$}{k=1,2} for any \texorpdfstring{$d$}{d}}    \label{sec:k2}

Denote by $C$ be the convex hull of $X$. In our first lemma we deal with the simple case of $k = 1$. In that situation $\HD_1(X) = C$, and all relevant halfspaces in $\half(1)$ form a single orbit as we saw also in Figure~\ref{figure:motivation}. In particular, Algorithm~\aA{} is exact for $k=1$ with any single initial ridge in the queue $\mathcal Q$. 

\begin{lemma}\label{lm:connected}
Any two halfspaces in $\half(1)$ determined by facets of $C$ are mutually reachable in $\half(1)$.
\end{lemma}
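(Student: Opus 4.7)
\medskip

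\noindent\textbf{Proof plan.} My strategy is to identify $\half(1)$ with the set of facets of $C$, translate the neighbouring relation into an adjacency relation in a graph arising from the polytope $C$, and then invoke the well-known fact that the 1-skeleton of any polytope is connected. The main obstacle is essentially bookkeeping — matching our definitions to the polarity framework of Section~\ref{sec:polarity}.

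First I would observe that, since $X$ is in general position, every facet of $C$ is a simplex with exactly $d$ vertices from $X$, and a halfspace $H$ belongs to $\half(1)$ if and only if $\partial H$ is an observational hyperplane with all $n - d$ remaining data points lying in $H$. Consequently, the map $F \mapsto H_F$ sending a facet $F$ of $C$ to the unique supporting halfspace of $C$ along $F$ is a bijection between the facets of $C$ and the halfspaces in $\half(1)$. Moreover, two halfspaces $H_F, H_{F'} \in \half(1)$ are neighbouring in the sense of Section~\ref{sec:notations} precisely when $F$ and $F'$ share a $(d-2)$-dimensional face of $C$, namely a ridge consisting of the $d-1$ common vertices of $F$ and $F'$.

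Next, after translating so that $C$ contains the origin in its interior (which affects neither $\half(1)$ nor the combinatorial structure of $C$), I would apply the polarity theory recalled in Section~\ref{sec:polarity}. By the discussion following \eqref{P by duality}, the conjugate faces $\hat F$ of the facets $F$ of $C$ are exactly the vertices of the polar polytope $C^\circ$, and two such vertices $\hat F, \hat{F'}$ are joined by an edge of $C^\circ$ if and only if $F$ and $F'$ share a $(d-2)$-dimensional face of $C$. Thus the neighbouring-relation graph on $\half(1)$ is isomorphic to the 1-skeleton of $C^\circ$.

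Finally I would conclude using the standard fact that the 1-skeleton of any (full-dimensional) convex polytope is connected; indeed, the boundary of $C^\circ$ is homeomorphic to $\Sph$, and the 1-skeleton of a polytope is even $d$-connected by Balinski's theorem, so in particular connected. Connectivity of the 1-skeleton of $C^\circ$ translates back to the existence, for any two facets $F, F'$ of $C$, of a finite chain $F = F_0, F_1, \dots, F_J = F'$ of facets such that consecutive $F_{j-1}, F_j$ share a $(d-2)$-face, i.e.\ a ridge. The corresponding chain of halfspaces $H_{F_0}, \dots, H_{F_J} \in \half(1)$ witnesses that $H_F$ and $H_{F'}$ are mutually reachable in $\half(1)$, which is what was to be proved.
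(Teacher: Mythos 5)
Your proof is correct and follows essentially the same route as the paper: pass to the polar polytope $C^\circ$ so that facets of $C$ become vertices, neighbouring facets become edges, and then invoke connectivity of the 1-skeleton of a convex polytope. The only cosmetic difference is that you make the bijection $F \mapsto H_F$ between facets and halfspaces of $\half(1)$ explicit and mention Balinski's theorem as an alternative citation for connectivity, whereas the paper just cites Ziegler for the connectivity fact.
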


The proof of Lemma~\ref{lm:connected} is given in Section~\ref{section:connected} in the Appendix. We now turn our attention to the more interesting case $k=2$. In a series of auxiliary lemmas stated and proved in Section~\ref{section:k2} in the Appendix, we obtain a proof of exactness of Algorithm~\aA{} for $k=2$.

\begin{theorem}\label{theorem:k2}
Algorithm~{\aA{}} finds all relevant halfspaces for $k = 1, 2$ and any dimension $d = 1, 2, \dots$. Consequently, Algorithm~{\aA{}} gives an exact solution for $k=1,2$.
\end{theorem}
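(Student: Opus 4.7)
The case $k=1$ follows immediately from Lemma~\ref{lm:connected}. Since $\HD_1(X) = \conv{X} = C$, the set $\half(1)$ consists of the halfspaces supported by the facets of $C$. The ridge $I$ chosen in~\ref{A1} lies on $\partial C$ and is contained in a facet of $C$, so step~\ref{A2} produces the unique halfspace in $\half(1)$ whose boundary contains $I$. By Lemma~\ref{lm:connected} the orbit of that halfspace in $\half(1)$ is the whole of $\half(1)$, so \textbf{\textsf{RidgeSearch}} recovers every relevant halfspace.

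For $k = 2$ the goal is a structural claim in the spirit of Theorem~\ref{thm:orbitspassthrough}: every orbit $O$ of $\half(2)$ contains a halfspace $H$ whose boundary hyperplane contains at least one of the ridges placed into $\mathcal Q$ by~\ref{A1}--\ref{A3}. These initial ridges are $I$ itself, the $d$ ridges extracted from each hyperplane through $I$ in~\ref{A2}, and, for every point $y$ cut off by some halfspace-through-$I$, the $d-1$ ridges of the form $(I \setminus \{i\}) \cup \{y\}$ added in~\ref{A3}. Crucially, every ridge in $\mathcal Q$ contains at least $d-2$ of the $d-1$ points of $I$. Once the structural claim is in hand, \textbf{\textsf{RidgeSearch}} processes a ridge from each orbit and hence returns the full $\half(2)$.

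The structural claim would be established in two layers. First, each $H \in \half(2)$ cuts off a single vertex $y(H)$ of $C$, and the halfspaces cutting off a fixed $y$ are in bijection with the facets of $\conv{X \setminus \{y\}}$ visible from $y$. These visible facets form a topological disk on $\partial \conv{X \setminus \{y\}}$ whose facet-adjacency graph (two facets adjacent iff they share a $(d-2)$-face) is connected; consequently all halfspaces cutting off the same vertex lie in a single orbit, and orbits of $\half(2)$ descend to equivalence classes on the set of cuttable vertices of $C$. Second, for each cuttable vertex $y$ one must exhibit a halfspace in its orbit whose boundary contains at least $d-2$ points of $I$, i.e.\ a relevant hyperplane passing through $y$ together with a prescribed $(d-2)$-subset of $I$ and one further data point. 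When $y$ itself is a cut-off point of a halfspace-through-$I$, the ridge $(I \setminus \{i\}) \cup \{y\}$ added by~\ref{A3} is a candidate and a short geometric check shows it always lies on the boundary of some $H \in \half(2)$.

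The hard part is the existence of such a well-positioned hyperplane in \emph{every} orbit, because the ridges from~\ref{A3} are highly constrained. I would attack it by combining two ingredients. The first is a controlled rotation: fix a $(d-2)$-subset $I' \subset I$ and a cuttable vertex $y$, and sweep hyperplanes around the $(d-2)$-flat $\aff{I' \cup \{y\}}$; the combinatorial side-count changes only when the hyperplane meets another data point, so the directions producing a halfspace in $\half(2)$ form an explicitly enumerable union of arcs. The second ingredient is a dimensional reduction: projecting along $\aff{I'}$ onto a two-dimensional complementary subspace turns the analysis into a planar problem with a distinguished image point of $y$, to which Theorems~\ref{thm:orbitspassthrough} and~\ref{theorem:d2} can be applied to track the behaviour of orbits under rotation. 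Reassembling these planar conclusions across the cuttable vertices of $C$ then shows that every orbit of $\half(2)$ is met by a ridge in $\mathcal Q$, completing the proof.
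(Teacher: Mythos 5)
Your $k=1$ argument and the first layer of your $k=2$ plan track the paper: Lemma~\ref{lm:connected} settles $k=1$, and your claim that all halfspaces of $\half(2)$ cutting off the same vertex $y$ lie in a single orbit is exactly the paper's Lemma~\ref{lm:samepoint}. Note, however, that you assert the connectivity of the adjacency graph of the facets of $\conv{X\setminus\{y\}}$ visible from $y$ without proof; the paper obtains this nontrivial step via the polar polytope together with Balinski's theorem, so even this layer is not complete as written.

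The genuine gap is in your second layer. You require, for each cuttable vertex $y$ of $C$, ``a relevant hyperplane passing through $y$ together with a prescribed $(d-2)$-subset of $I$ and one further data point.'' First, this conflates two different things: what is actually needed is a halfspace \emph{in the orbit} containing the halfspaces that cut off $y$ whose boundary contains an initial ridge, and that boundary need not pass through $y$ at all. Second, in the literal form you state it the claim is false in general: for a vertex $y$ far from $I$ there need not exist any level-$2$ relevant hyperplane containing $y$ and $d-2$ points of $I$ (for instance, in $\R^3$ rotate a plane about the chord through $y$ and a fixed point of $I$ when that chord passes near the centre of a roughly spherical cloud; the smaller side count stays of order $n/2$ throughout the rotation and never drops to one). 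Your proposed repair --- sweeping about $\aff{I'\cup\{y\}}$ and projecting onto a plane so as to invoke Theorems~\ref{thm:orbitspassthrough} and~\ref{theorem:d2} --- is only a sketch and does not address this obstruction. The paper avoids the problem entirely: instead of forcing a witness hyperplane through $y$ and points of $I$, it propagates reachability across the vertices of $C$: halfspaces cutting off vertex-neighbouring vertices are mutually reachable (Lemma~\ref{lm:relations}), every vertex of $C$ is vertex-reachable from a vertex of the facet associated with $I$ (Lemma~\ref{lm:vertices}), and the initialisation~\ref{A1}--\ref{A3} already supplies, for each vertex of that facet, a relevant halfspace cutting it off. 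Without an argument of this chaining type (or a correct proof of your global structural claim), your proof does not go through.
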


Again as a consequence of our proof of Theorem~\ref{theorem:k2} we are able to reduce the initial set of ridges $\mathcal Q$ in Algorithm~\aA{} for $k=2$ to $d$, still keeping the algorithm exact. Indeed, take a single facet of $C$ and consider any collection of $d$ halfspaces from $\half(2)$, each of these halfspaces cutting off one of the $d$ vertices of $C$. The set of all ridges in the boundaries of these halfspaces is placed in the initial queue $\mathcal Q$. Then the proof of Theorem~\ref{theorem:k2} guarantees that for $k=2$ the procedure \textbf{\textsf{RidgeSearch}} gives the exact central region $\HD_2(X)$.


%
%
%
%
%

\subsection{Algorithm~\aA{} is not exact in general} \label{sec:not exact}

The output of Algorithm~\aA{} is a  collection of relevant halfspaces $\half_k \subset \half(k)$ at level $k$. Their intersection is therefore always a superset of the corresponding central region $\HD_k(X)$. In the following example we demonstrate that for $k>2$ and $d>2$, Algorithm~\aA{} does not always recover the central region exactly.

\begin{example}
Consider a dataset $X$ of $n = 12$ points in $\R^3$ in general position. Each of these points is labeled by a colour: red, blue or green. We start with red points positioned in the vertices of a regular tetrahedron 
    \[
    r_i = \frac{1}{\sqrt{2}}\left(e_i-\frac{1}{2}\left(1,1,1\right)\right) \mbox{ for }i=1,2,3,\quad \mbox{ and } \quad
    r_4 = \frac{1}{\sqrt{8}}\left(1,1,1\right),
    \]
for $e_1=(1,0,0)$, $e_2=(0,1,0)$ and $e_3=(0,0,1)$. The blue and the green points are respectively placed at $b_i = -0.3\,r_i$ and $g_i = 0.15\,r_i$, for $i=1,\dots,4$. To satisfy the condition of the points being in general position, we rotate a bit the vertices of the blue and the green tetrahedrons, each in a slightly different way. The data is constructed so that the convex hull of $X$ is formed only by the four red points, and so that the green vertices lie outside the blue tetrahedron. An example of such a configuration of points is in Figure~\ref{fig:12points} and in the supplementary \proglang{Mathematica} notebook, where our whole construction is visualised in an interactive display.

\begin{figure}[htpb]
    \centering
    \includegraphics[width=.485\textwidth]{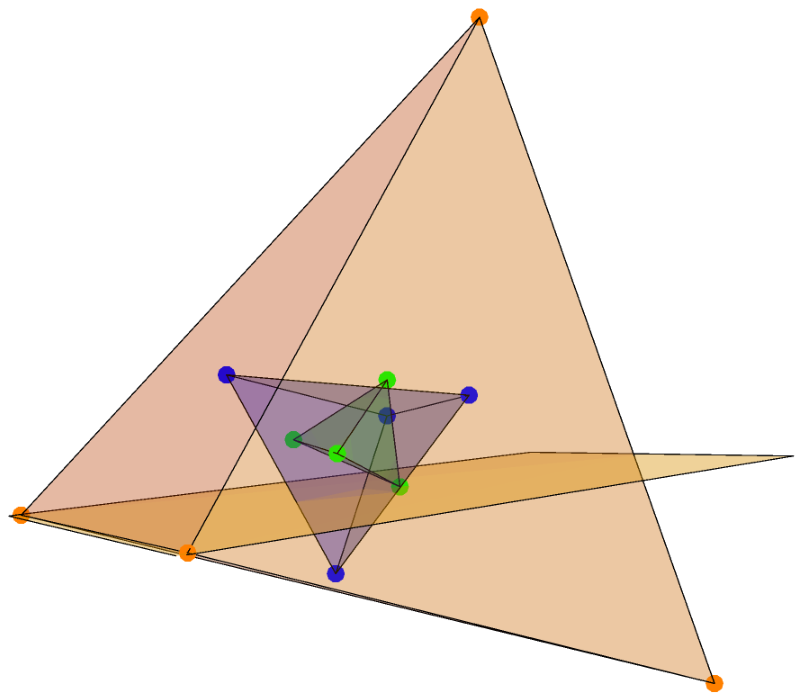}
    \includegraphics[width=.485\textwidth]{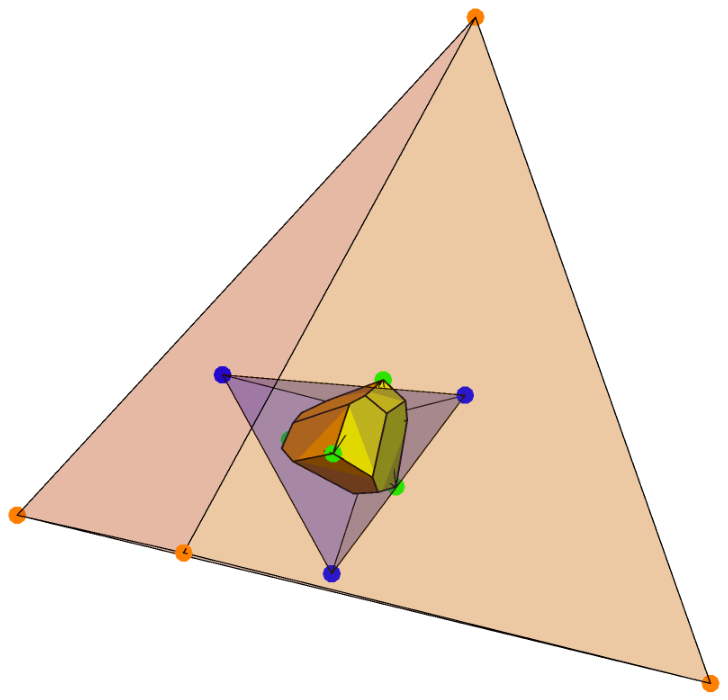} 
    \caption{The point configuration $X$ from Section~\ref{sec:not exact} with a single relevant plane for $k=3$ (left hand panel) and the central region for $k=3$ as computed by Algorithm~{\aA{}} (yellow polytope in the right hand panel). With any choice of an initial ridge and the corresponding relevant planes, no relevant plane formed by three blue points can be reached by Algorithm~{\aA{}}. The green vertices $g_i$, $i=1,\dots,4$ are therefore not cut off by any found relevant plane, and each $g_i$ is declared to lie in the central region $\HD_3(X)$. Each $g_i$ has, however, Tukey depth only $\HD(g_i;X) = 2$. This can be seen also in the right hand panel --- points $g_i$ all lie in the region computed by Algorithm~{\aA{}}.}
    \label{fig:12points}
\end{figure}

Consider the depth level $k=3$. Algorithm~\aA{} begins in step~\ref{A1} with a ridge $E$, i.e. an edge of the convex hull of $X$, which must consist of two red points. The algorithm then proceeds in step~\ref{A2} by finding two relevant planes $P_1$ and $P_2$ that contain $E$. Because of how the points of $X$ are positioned, each of these planes is determined by the two red points incident to $E$, and an additional single green point, see also the left hand panel of Figure~\ref{fig:12points}. Each of these planes cuts off one red and one blue point from $X$. In the initial step we add into the queue $\mathcal Q$ all the ridges obtained by substituting one of the points from the initial ridge $E$ with one point cut off by either $P_1$ or $P_2$ from $X$ in step~\ref{A3}. This gives us the initial set of ridges $\mathcal Q$ whose incident points are of the following colours: R-R, R-B and R-G, where R,G and B represent red, green and blue. 

Note that any plane determined by three blue points is relevant for $k=3$ --- it cuts off exactly two points from $X$, one red and one green. At the same time, no other plane determined by two blue points and one point of another colour is relevant for $k=3$. This is very easy to see in the interactive \proglang{Mathematica} visualisation provided in the Supplementary Material. Therefore, from our initial set of ridges $\mathcal Q$ it is impossible to obtain any ridge coloured as B-B, meaning that Algorithm~\aA{} fails to find any relevant plane determined by three blue points. For that reason, the resulting depth region for $k=3$ obtained by Algorithm~\aA{} contains all the green points $g_i$, $i=1,\dots,4$, which certainly have Tukey depth only $\HD(g_i;X) = 2 < k$. The last claim is seen by considering a halfspace passing through a green point whose boundary plane is parallel to the adjacent blue facet. That halfspace contains only a single red point in its interior. As we intended to show, Algorithm~\aA{} fails to recover the exact central region $\HD_3(X)$. 
\end{example}

%
%
%
%

\section{Algorithm~\aB{}: Exact computation of central regions} \label{sec:exact}

We now turn to the problem of exact computation of the central regions by means of the ridge-wise search strategy \textbf{\textsf{RidgeSearch}}. Our intention is to find an initial set of ridges $\mathcal Q$ guaranteeing the exactness of the procedure. As demonstrated in our example from Section~\ref{sec:not exact} and corroborated in Section~\ref{sec:dual} below, it turns out that in the task of computing $\HD_k(X)$ directly, it is unlikely that an initial set of much less than all $\binom{n}{d-1}$ ridges (as for Algorithm~\aCmb{}) suffices for an exact result. 

We therefore approach the problem in a different way, and argue that \textbf{\textsf{RidgeSearch}} is feasible to be run recursively. We show that given the set of all relevant halfspaces $\half(k-1)$, to obtain all relevant halfspaces $\half(k)$ using \textbf{\textsf{RidgeSearch}} it is enough to initialise the queue $\mathcal Q$ in~\ref{S1} in the following way (see also third bullet point at the end of Section~2 in~\cite{Liu_etal2019}):
    \begin{enumerate}[label=\textbf{(B$_{\arabic*}$)},label=\upshape{\textbf{(B$_{\arabic*}$)}}]
        \item \label{B1} All ridges of points in the boundaries of relevant halfspaces from $\half(k-1)$ are placed into $\mathcal Q$.
    \end{enumerate}
The complete Algorithm~\aB{} involves running steps~\ref{S1}--\ref{S3} with the queue $\mathcal Q$ in step~\ref{S1} chosen using~\ref{B1}; for a summary of our three procedures see Table~\ref{table:summary}. 

\begin{table}
\begin{tabular}{c|c|c|c}
    \textbf{\textsf{RidgeSearch}} version & Algorithm~\aA{} & Algorithm~\aB{} & Algorithm~\aCmb{} \\ \hline
    Initialisation in \textbf{Step~\ref{S1}} & \ref{A1}--\ref{A3} & \ref{B1} & all ridges
\end{tabular}
\caption{A summary of the initialisation used in our three variants of \textbf{\textsf{RidgeSearch}}.}
\label{table:summary}
\end{table}

The initialisation~\ref{B1} typically results in a larger set of initial ridges $\mathcal Q$ than what is considered in~\ref{A1}--\ref{A3}. In Section~\ref{sec:dual} and the examples in the Supplementary Material we however argue that this appears to be needed to guarantee exactness. On the other hand, the set of ridges from~\ref{B1} has usually much less elements than $\binom{n}{d-1}$  needed for Algorithm~\aCmb{}. It turns out that numerically, running our algorithm with initialisation~\ref{B1} several times to compute all central regions at levels $k = 1, \dots, K$ for $K \geq 1$ given is surprisingly not slower than running the fast (and not exact) Algorithm~\aA{} to compute $\HD_k(X)$ for all $k=1,\dots,K$. All this will be demonstrated in numerical studies in Section~\ref{sec:simulation}.

The following theorem is the main ingredient of our exact Algorithm~\aB{} for the computation of the central regions. We describe, given the set of all relevant halfspaces $\half(k)$ at level $k \geq 2$, a way to find all relevant halfspaces $\half(k+1)$ at level $k+1$. We argue that any $H \in \half(k+1)$ can be reached from a halfspace $H' \in \half(k+1)$ that shares a common ridge with a relevant halfspace $\widetilde{H}$ from $\half(k)$. Schematically, we obtain
    \[  \begin{tikzcd}[column sep=huge]
H \in \half(k+1) \arrow[<->]{r}{reachable} & H' \in \half(k+1) \arrow[<->]{r}{neighbouring} & \widetilde{H} \in \half(k).
\end{tikzcd}    \]
A detailed proof of Theorem~\ref{next level} is found in Section~\ref{sec:next level} in the Appendix.

\begin{theorem}   \label{next level}
Let $H\in\half(k+1)$. Then there exists \begin{enumerate*}[label=(\roman*)] \item a halfspace $H'\in\half(k+1)$ that is reachable from $H$ and \item a halfspace $\widetilde{H} \in \half(k)$ that shares a ridge in the boundary with $H'$.\end{enumerate*} In particular, Algorithm~{\aB} defined by~\textbf{\textsf{RidgeSearch}} with the initialisation~\ref{B1} is exact.
\end{theorem}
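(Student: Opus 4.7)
The plan is to reduce Theorem~\ref{next level} to the combinatorial claim that every orbit $\mathcal O$ of relevant halfspaces in $\half(k+1)$ contains at least one halfspace $H'$ whose boundary shares a ridge with that of some $\widetilde H \in \half(k)$. Granted this claim, the theorem is immediate, since $H, H' \in \mathcal O$ are reachable from each other in $\half(k+1)$. The exactness of Algorithm~\aB{} then follows by induction on $k$: for $k=1$ one initial ridge suffices, because $\half(1)$ forms a single orbit by Lemma~\ref{lm:connected}; assuming that $\mathcal H_k = \half(k)$ has been computed exactly, the initialisation~\ref{B1} places into $\mathcal Q$ at least one ridge from every orbit of $\half(k+1)$, so that the main loop~\ref{S2} recovers $\mathcal H_{k+1} = \half(k+1)$ in full, and $\bigcap \mathcal H_{k+1} = \HD_{k+1}(X)$.

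The first ingredient is a local pivoting lemma for rotation around a single ridge. Fix $H \in \half(k+1)$ with $\partial H \cap X = I \cup \{x\}$, and rotate $\partial H$ within the one-parameter pencil of hyperplanes through $\aff{I}$, in the direction along which $x$ moves into the interior of $H$. Between two consecutive sweeps past a data point the cardinality of the open complement stays constant, while at a sweep it changes by $\pm 1$. The next observational hyperplane is therefore determined by the first further data point $y$ that the rotating hyperplane touches. If $y$ previously lay on the $H$-side of $\partial H$, the new halfspace is again in $\half(k+1)$, producing a neighbour of $H$ within $\mathcal O$; if $y$ previously lay in the open complement of $H$, the new halfspace is in $\half(k)$, proving the claim with $H' = H$.

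The main work, and the main obstacle, is to exclude the pathological situation in which for \emph{every} $H^* \in \mathcal O$ and \emph{every} ridge $I^* \subset \partial H^*$ only the first case above occurs. I plan to derive a contradiction using the polarity theory of Section~\ref{sec:polarity}. Let $P = \bigcap_{H^* \in \mathcal O} H^*$; since $\mathcal O \subseteq \half(k+1)$ we have $\HD_{k+1}(X) \subseteq P$, and after translating a point of $\intr{\HD_{k+1}(X)}$ to the origin the polar polytope $P^\circ$ is well-defined, with vertices in bijective correspondence with the facets of $P$. The plan is to pick a vertex $\hat F^*_0$ of $P^\circ$ extremal in a direction that detects the strict inclusion $\HD_{k+1}(X) \subset \intr{\HD_k(X)}$ --- for instance, in the outer normal direction of a facet of $\HD_k(X)$ separating $\HD_k(X)$ from its exterior --- and to show, via the edge correspondence for polar polytopes, that on the conjugate facet $F^*_0 \subset \partial H^*_0$ of $P$ a specific ridge of $\partial H^*_0$ must rotate first through a point of the open complement of $H^*_0$, contradicting the standing assumption. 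The delicate point is to handle orbits whose halfspaces are all ``redundant'' (their hyperplanes do not support $P$ or $\HD_{k+1}(X)$), where the extremal direction must be chosen with care in order to still produce the required contradiction.
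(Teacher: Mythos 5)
Your local pivoting lemma is correct and matches the core move of the paper's Lemma~\ref{iterative step}: rotating $\partial H$ about a ridge $I$ in the direction that keeps the remaining boundary point inside $H$, the first data point encountered gives a neighbouring observational halfspace that lands either in $\half(k+1)$ or in $\half(k)$, exactly as you say. The genuine gap is the step you flag yourself: ruling out the ``pathological'' orbit in which, for every $H^*\in\mathcal O$ and every ridge of $\partial H^*$, the pivot always stays in $\half(k+1)$. Your proposed replacement via the polar polytope $P^\circ$ of $P=\bigcap_{H^*\in\mathcal O}H^*$ is not worked out, and the obstacle you name is real: halfspaces in $\mathcal O$ whose hyperplanes do not support $P$ have no facet of $P$ and hence no vertex of $P^\circ$ assigned to them, so the vertex/edge correspondence of Section~\ref{sec:polarity} does not see them, and an extremal vertex of $P^\circ$ need not be the conjugate of a facet arising from the orbit element you would want to pivot. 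No repair is sketched. As it stands, the proof proposal establishes the ``either/or'' of a single pivot but not the existence of a pivot landing in $\half(k)$, which is the entire content of the theorem.

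The paper closes this gap by a monotone potential rather than by duality. It fixes the cut-off set $A=X\setminus H$ and the point $a\in A$ nearest to $\partial H$, and it uses the elementary simplex Lemma~\ref{simplex} to select a \emph{specific} ridge $M\subset\partial H\cap X$ so that the pivot about $M$ (in the direction determined by $a_j$) satisfies a dichotomy: either the first point hit lies in $A$, producing the desired $\widetilde H\in\half(k)$, or the new halfspace $H_1$ cuts off exactly the same set $A$ and satisfies the strict decrease $\dist(a;\partial H_1)<\dist(a;\partial H)$. Since there are finitely many observational halfspaces and the distance to the fixed point $a$ strictly decreases along the chain $H,H_1,H_2,\dots$ in the second case, the first case must eventually occur; the chain is, by construction, a reachability path inside $\half(k+1)$, so $H'$ is the penultimate element. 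The difference from your plan is therefore twofold: the ridge is not arbitrary but chosen via Lemma~\ref{simplex} to guarantee progress, and termination comes from a strictly decreasing real-valued invariant rather than from a global extremality argument in the dual. This yields a complete, elementary proof where your polarity route remains a heuristic with an unresolved redundancy issue.
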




%
%
%

\section{Empirical comparison: Numerical studies}   \label{sec:simulation}

\subsection{Algorithm~\aA{} and \texorpdfstring{\code{isodepth}}{isodepth} for \texorpdfstring{$d=2$}{d=2}}

Having an exactness guarantee for Algorithm~\aA{} in dimension $d=2$, in our first simulation exercise we compare this procedure with the exact implementation of the algorithm called \code{isodepth} for the computation of central regions for $d=2$ from the \proglang{R} package \pkg{depth} \cite{R_depth}. The latter procedure was originally designed in \cite{Ruts_Rousseeuw1996}, and later revised in \cite{Rousseeuw_Ruts1998, Rousseeuw_etal1999}. It is based on the idea of a circular sequence applied to the angles between pairs of data points. This method is applicable only in dimension $d=2$. 
The efficient \proglang{Fortran} implementation of \code{isodepth} available in the package \pkg{depth} is matched to our \proglang{C++} implementation of Algorithm~\aA{} in the package \pkg{TukeyRegion}. For different values of sample sizes $n \in \left\{ 100, 250, 500, 1000, 2500 \right\}$ we generate random samples from the standard bivariate normal distribution and compute their central regions. The regions considered are first at single levels $k \in \left\{ \floor{n/10}, \floor{n/5}, \floor{n/3} \right\}$ and then also at all levels $k$ at once. In the latter case, we simulate also the task of finding the Tukey median, the smallest non-empty central region. We ran $100$ independent replications of this setup. The comparison was conducted on a machine having processor Intel(R) Core(TM) i7-4980HQ (2.8 GHz) with 16 GB of physical memory and macOS Monterey (Version 12.4) operating system. The resulting execution times (in seconds) are presented in Table~\ref{tab:isodepth}. Since Algorithm~\aA{} is a particular case that implements a general framework for any dimension $d$, it is slightly outperformed by \code{isodepth}, designed only for $d=2$, for smaller sample sizes $n$. On the other hand, Algorithm~\aA{} gains the upper hand over \code{isodepth} already for $n\ge 250$ and with growing sample size becomes even more advantageous.

\begin{table}[ht]
\centering
\begin{tabular}{cc|cc}
\hline & & \multicolumn{1}{c}{\code{TukeyRegion}~\aA{}} & \multicolumn{1}{c}{\code{isodepth}} \\ 
   \hline
  $n = 100$  & $\floor{n/10}$ & \phantom{0}\phantom{0}0.00219  \scriptsize{(0.000510)}    & \phantom{0}\phantom{0}0.00154  \scriptsize{\phantom{0}(0.000302)}    \\ 
             & $\floor{n/5}$  & \phantom{0}\phantom{0}0.00277  \scriptsize{(0.000566)}    & \phantom{0}\phantom{0}0.00205  \scriptsize{\phantom{0}(0.000261)}    \\ 
             & $\floor{n/3}$  & \phantom{0}\phantom{0}0.00294  \scriptsize{(0.000490)}    & \phantom{0}\phantom{0}0.00249  \scriptsize{\phantom{0}(0.000437)}    \\ 
             & All            & \phantom{0}\phantom{0}0.0831\phantom{0}  \scriptsize{(0.00534)\phantom{0}}    & \phantom{0}\phantom{0}0.0773\phantom{0}  \scriptsize{\phantom{0}(0.00871)\phantom{0}}    \\ 
  $n = 250$  & $\floor{n/10}$ & \phantom{0}\phantom{0}0.00491  \scriptsize{(0.000935)}    & \phantom{0}\phantom{0}0.00945  \scriptsize{\phantom{0}(0.000607)}    \\ 
             & $\floor{n/5}$  & \phantom{0}\phantom{0}0.00647  \scriptsize{(0.000451)}    & \phantom{0}\phantom{0}0.0129\phantom{0}  \scriptsize{\phantom{0}(0.00118)\phantom{0}}    \\ 
             & $\floor{n/3}$  & \phantom{0}\phantom{0}0.00795  \scriptsize{(0.000394)}    & \phantom{0}\phantom{0}0.0172\phantom{0}  \scriptsize{\phantom{0}(0.00210)\phantom{0}}    \\ 
             & All            & \phantom{0}\phantom{0}0.651\phantom{0}\phantom{0}  \scriptsize{(0.0128)\phantom{0}\phantom{0}}    & \phantom{0}\phantom{0}1.623\phantom{0}\phantom{0}  \scriptsize{\phantom{0}(0.244)\phantom{0}\phantom{0}\phantom{0}}    \\ 
  $n = 500$  & $\floor{n/10}$ & \phantom{0}\phantom{0}0.0133\phantom{0}  \scriptsize{(0.00115)\phantom{0}}    & \phantom{0}\phantom{0}0.0445\phantom{0}  \scriptsize{\phantom{0}(0.00276)\phantom{0}}    \\ 
             & $\floor{n/5}$  & \phantom{0}\phantom{0}0.0191\phantom{0}  \scriptsize{(0.000900)}    & \phantom{0}\phantom{0}0.0610\phantom{0}  \scriptsize{\phantom{0}(0.00485)\phantom{0}}    \\ 
             & $\floor{n/3}$  & \phantom{0}\phantom{0}0.0240\phantom{0}  \scriptsize{(0.00124)\phantom{0}}    & \phantom{0}\phantom{0}0.0817\phantom{0}  \scriptsize{\phantom{0}(0.00808)\phantom{0}}    \\ 
             & All            & \phantom{0}\phantom{0}4.44\phantom{0}\phantom{0}\phantom{0}  \scriptsize{(0.0418)\phantom{0}\phantom{0}}    & \phantom{0}20.1\phantom{0}\phantom{0}\phantom{0}\phantom{0}  \scriptsize{\phantom{0}(4.53)\phantom{0}\phantom{0}\phantom{0}\phantom{0}}    \\ 
  $n = 1000$ & $\floor{n/10}$ & \phantom{0}\phantom{0}0.0437\phantom{0}  \scriptsize{(0.00234)\phantom{0}}    & \phantom{0}\phantom{0}0.237\phantom{0}\phantom{0}  \scriptsize{\phantom{0}(0.0116)\phantom{0}\phantom{0}}    \\ 
             & $\floor{n/5}$  & \phantom{0}\phantom{0}0.0676\phantom{0}  \scriptsize{(0.00250)\phantom{0}}    & \phantom{0}\phantom{0}0.319\phantom{0}\phantom{0}  \scriptsize{\phantom{0}(0.0148)\phantom{0}\phantom{0}}    \\ 
             & $\floor{n/3}$  & \phantom{0}\phantom{0}0.0873\phantom{0}  \scriptsize{(0.00456)\phantom{0}}    & \phantom{0}\phantom{0}0.416\phantom{0}\phantom{0}  \scriptsize{\phantom{0}(0.0335)\phantom{0}\phantom{0}}    \\ 
             & All            & \phantom{0}33.5\phantom{0}\phantom{0}\phantom{0}\phantom{0}  \scriptsize{(0.236)\phantom{0}\phantom{0}\phantom{0}}    & 257\phantom{0}\phantom{0}\phantom{0}\phantom{0}\phantom{0}\phantom{0}  \scriptsize{(66.6)\phantom{0}\phantom{0}\phantom{0}\phantom{0}\phantom{0}}   \\ 
  $n = 2500$ & $\floor{n/10}$ & \phantom{0}\phantom{0}0.271\phantom{0}\phantom{0}  \scriptsize{(0.0225)\phantom{0}\phantom{0}}    & \phantom{0}\phantom{0}2.83\phantom{0}\phantom{0}\phantom{0}  \scriptsize{\phantom{0}(0.0552)\phantom{0}\phantom{0}}    \\ 
             & $\floor{n/5}$  & \phantom{0}\phantom{0}0.429\phantom{0}\phantom{0}  \scriptsize{(0.0154)\phantom{0}\phantom{0}}    & \phantom{0}\phantom{0}3.44\phantom{0}\phantom{0}\phantom{0}  \scriptsize{\phantom{0}(0.135)\phantom{0}\phantom{0}\phantom{0}}    \\ 
             & $\floor{n/3}$  & \phantom{0}\phantom{0}0.614\phantom{0}\phantom{0}  \scriptsize{(0.0195)\phantom{0}\phantom{0}}    & \phantom{0}\phantom{0}4.18\phantom{0}\phantom{0}\phantom{0}  \scriptsize{\phantom{0}(0.253)\phantom{0}\phantom{0}\phantom{0}}    \\ 
             & All            & 532\phantom{0}\phantom{0}\phantom{0}\phantom{0}\phantom{0}\phantom{0}  \scriptsize{(2.55)\phantom{0}\phantom{0}\phantom{0}\phantom{0}}   & $>1$ hour \\ 
   \hline
\end{tabular}
\caption{Means and standard deviations (in brackets) of execution times for algorithms~\aA{} and~\code{isodepth} when calculating depth contour(s) for a bivariate normal sample (in seconds, over $100$ random samples).}
\label{tab:isodepth}
\end{table}





\subsection{Algorithm~\aB{} and its competitors for \texorpdfstring{$d>2$}{d>2}}    \label{sec:simulation2}

We have implemented the exact Algorithm~\aB{} in the new version 0.1.5.5 of the \proglang{R} package \pkg{TukeyRegion} \cite{R_TukeyRegion}. We compare three algorithms for the computation of the central regions, each based on~\textbf{\textsf{RidgeSearch}}: \begin{enumerate*}[label=(\roman*)] \item the non-exact fast Algorithm~\aA{} with the initialisation~\ref{A1}--\ref{A3}; \item our new exact Algorithm~\aB{}; and \item the exact combinatorial Algorithm~\aCmb{} based on plugging all $\binom{n}{d-1}$ ridges of $X$ into the initial queue $\mathcal Q$. \end{enumerate*} The complexity of Algorithm~\aB{} amounts to $\bigO{\omega\, n^d \log(n)}$. Here $\omega$ depends on the task at hand. When a constant number of outer regions is to be calculated, its complexity can be as low as $\bigO{1}$. When, on the other hand, one is interested in computing a portion (say $\gamma\in[0,1/2]$) of outer central regions, the complexity of $\omega$ is $\bigO{n}$. Consequently, computation of all regions has time complexity $\bigO{n^{d+1} \log(n)}$; for $d=2$ this yields the same complexity as \code{isodepth}, namely $\bigO{n^3 \log(n)}$. 

For several combinations of $n$ and $d$ we have drawn $100$ independent samples $X$ from the $d$-variate standard normal distribution of size $n$. Since such samples $X$ are in general position almost surely, the maximum Tukey depth of a full-dimensional central region is bounded from above by $\floor{(n - d + 1)/2}$, see \cite[Theorem~1]{Liu_etal2020}. For each $k \in \left\{ 1, \dots, \floor{(n - d + 1)/2} \right\}$ we computed the first $k$ central regions $\HD_{\ell}(X)$, $\ell = 1, \dots, k$, using the Algorithms~\aA{}--\aCmb{}. We kept a record of \begin{enumerate*}[label=(\roman*)] \item the number of found relevant halfspaces $\half_k$ at level $k$; \item the number of ridges visited by the algorithm; and \item the total execution time for computing the first $k$ central regions of the data. \end{enumerate*} The full study was run for the following combinations of $(n,d)$: $(50,3)$, $(50,4)$, $(50,5)$, $(100,3)$, $(100,4)$, and $(250,3)$. With a smaller number of independent runs, we have tested the algorithms also with larger values of $n$ up to $n=5\,000$ for $d=3$ and $n=1\,000$ for $d=4$; the results are quite analogous to those presented below. 

First, we evaluated the exactness of Algorithm~\aA{}. In the complete numerical study, we generated $600$ normal samples, and calculated a total of $(24+23+23+49+48+124)\times 100=29\,100$ central regions. Out of these results, only in the case of a single central region (for $d=3$ and $n=50$) Algorithm~\aA{} failed to detect all relevant halfspaces --- similarly as in our example from Section~\ref{sec:not exact}, four relevant halfspaces were missed. This result validates the empirical evidence from \cite{Liu_etal2019}. For random samples from well behaved distributions, Algorithm~\aA{} is quite likely to give exact results; nevertheless, in general it is non-exact.

\begin{figure}[htpb]
    \centering
    \includegraphics[width=.85\textwidth]{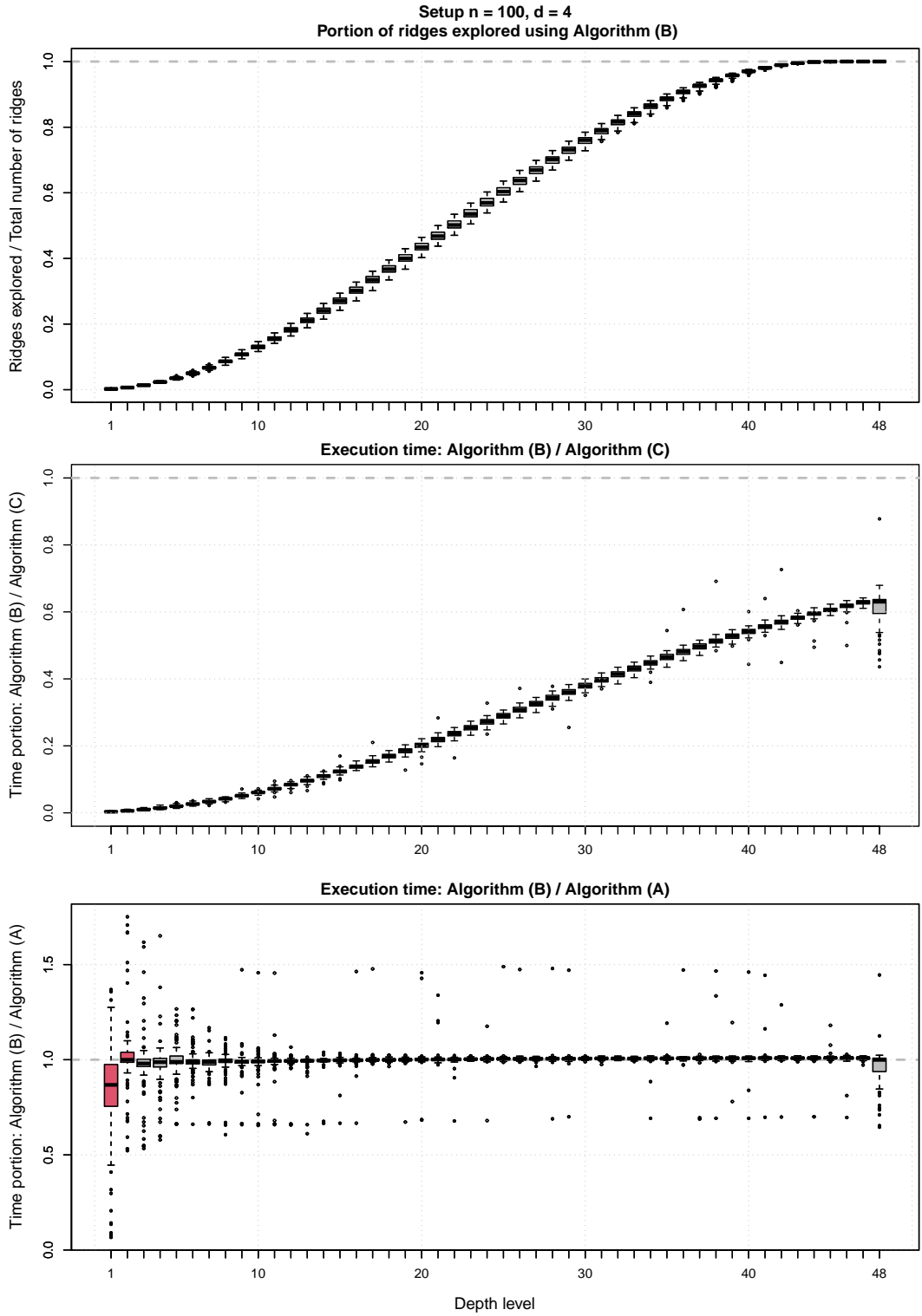} 
    \caption{Results of the numerical study, setup $n = 100$, $d = 4$. For a detailed description of the results see Section~\ref{sec:simulation2}.}
    \label{fig:all}
\end{figure}

In this section we discuss the detailed results of the numerical study for $d = 4$ and $n = 100$. The results are summarised in Figure~\ref{fig:all}, which consists of three parts: \begin{enumerate*}[label=(\roman*)] \item In the top panel we see the boxplots of the proportion of ridges visited by Algorithm~\aB{}, compared to the total number of $\binom{n}{d-1}$ ridges of $X$ used by Algorithm~\aCmb{}. We see that this fraction is, especially for lower to moderate depth levels $k$, relatively small. \item In the middle panel there are the boxplots of ratios of execution times of Algorithms~\aB{} and~\aCmb{}. This figure resembles the one from the top panel. Even for the complete set of central regions (that is, all regions up to $k = \floor{(n - d + 1)/2} = 48$), the total computation time for Algorithm~\aB{} typically does not exceed $70~\%$ of the time used by Algorithm~\aA{}. For computation of lower regions the spared computation power is substantial. \item Finally, in the bottom panel we see an analogous display with a fraction of computation time, this time Algorithm~\aB{} compared to the non-exact Algorithm~\aA{}. The first two boxplots ($k=1,2$) may be disregarded as in that case Algorithm~\aA{} is exact. In all other boxplots, we see that in addition to having a guarantee of exactness for Algorithm~\aB{}, our computation in Algorithm~\aB{} does not increase the execution times of Algorithm~\aA{}. \end{enumerate*} All these results are quite favourable. In the task of computing all the first $k$ central regions of $X$, the new Algorithm~\aB{} is more efficient than the combinatorial Algorithm~\aCmb{}, and also not slower than the fast Algorithm~\aA{}. The final results for other combinations of $n$ and $d$ are quite similar, and given without additional commentary in the Supplementary Material. To get a rough idea about the raw computation times of the three algorithms, in the Supplementary Material we include a table with average execution times of the three considered algorithms, matching the more extensive simulation study presented in~\cite[Tables~2 and~3]{Liu_etal2019}. 

In our numerical study we computed all central regions at levels $1, \dots, k$ at the same time. It is, however, important to mention that in contrast to the combinatorial Algorithm~\aCmb{}, Algorithm~\aB{} is inherently recursive when initialising the queue $\mathcal Q$ in~\ref{B1}. Therefore, Algorithm~\aB{} is typically slower than the direct combinatorial Algorithm~\aCmb{} if the computation of a single central region is of interest. We therefore conclude that for exact computation of Tukey depth central regions, Algorithm~\aB{} is the fastest if all regions $\HD_k(X)$ are to be computed for $k=1,\dots,K$, or if a region at a lower level $k$ is to be found. Algorithm~\aCmb{} is to be preferred if a single region $\HD_k(X)$ for a higher value $k$ is searched for.

%
%
%
%

\section{Dual graph: (Non-)Exactness of Algorithm~\aA{} and negative results}  \label{sec:dual}

We saw that Algorithm~\aA{} does not always find the central region. On the other hand, the exact Algorithm~\aB{} proposed in Section~\ref{sec:exact} is typically slower than Algorithm~\aA{} if a single central region $\HD_k(X)$ is to be evaluated. In the present section we first explore the negative example of Section~\ref{sec:not exact} in view of polarity considerations. We present a different vantage point on the search for relevant halfspaces based on the duality theory. It is shown that in the so-called dual graph of the dataset $X$, the search for an exact algorithm manifests itself as a natural problem in the theory of graphs. This analysis serves us to show that multiple promising simplifications of our Algorithm~\aB{} along the lines of Algorithm~\aA{} fail to recover the exact Tukey depth central region.

%
%

\subsection{Dual graph: Definition}

We have seen in Section~\ref{sec:polarity} that any polytope $P$ in $\R^d$ whose interior contains the origin can be represented in its dual form as a polytope $P^\circ$. The facets $F_j$ of $P$ correspond to the vertices $\hat{F}_j$ of $P^\circ$, and facets $F_j$ and $F_k$ of $P$ are mutually neighbouring if and only if the vertices $\hat{F}_j$ and $\hat{F}_k$ share an edge on $P^\circ$. Instead of working with polytopes, we now generalise the polarity paradigm directly toward a dataset $X$. To visualise our problem of finding all relevant halfspaces at a given level $k = 1,2,\dots$ of $X$, we introduce the dual graph of $X$, and show how the search strategy employed in~\textbf{\textsf{RidgeSearch}} translates into a problem of graph connectivity in the dual space.

Any $d$ distinct points $a_1, \dots, a_{d}$ from $X$ uniquely determine a hyperplane $\aff{a_1, \dots, a_d}$. We suppose that $X$ is such that none of these $\binom{n}{d}$ hyperplanes passes through the origin; in the other case we shift the dataset $X$ slightly. The \emph{polar} to $\aff{a_1, \dots, a_d}$ is defined as 
    \begin{equation}    \label{polar hyperplane}
     \aff{a_1, \dots, a_d}^\circ = \left\{ x \in \R^d \colon \left\langle x, y \right\rangle = 1 \mbox{ for all } y \in \aff{a_1, \dots, a_d} \right\}.
    \end{equation}
%
This definition is analogous to that of a conjugate face from \eqref{conjugate face}. Indeed, for a face $F = \conv{a_1, \dots, a_d}$ we have $\hat{F} = \aff{a_1, \dots, a_d}^\circ$. In particular, each polar to an observational hyperplane is a single point in the dual space. The polar from \eqref{polar hyperplane} is easy to express analytically. Writing $u \in \Sph$ for a unit normal vector of $\aff{a_1, \dots, a_d}$, we have 
    \begin{equation}    \label{dual vertex}
     \aff{a_1, \dots, a_d}^\circ = \left\{ \frac{u}{\left\langle a_1, u \right\rangle} \right\}.
    \end{equation}  
Note that by our assumption that $\aff{a_1, \dots, a_d}$ does not contain the origin, the single point set above is well defined as $\left\langle a_1, u \right\rangle$ is the distance of the hyperplane $\aff{a_1, \dots, a_d}$ from the origin. Analogously as in Section~\ref{sec:polarity}, for two different observational hyperplanes $\aff{a_1, \dots, a_d}$ and $\aff{b_1, \dots, b_d}$ we join the pair of vertices $\aff{a_1, \dots, a_d}^\circ$ and $\aff{b_1, \dots, b_d}^\circ$ in the dual space by an edge if and only if the two hyperplanes are \emph{(mutually) neighbouring}, meaning that their defining sets of data points share a common subset of exactly $d-1$ elements (a ridge)
    \[  \# \left(\left\{a_1, \dots, a_d\right\} \cap \left\{b_1, \dots, b_d\right\} \right) = d-1.  \]
This definition is an extension of the notion of neighbouring facets of a polytope from Section~\ref{sec:k2} with $k=2$. Finally, each vertex \eqref{dual vertex} in the dual space corresponding to a hyperplane $\aff{a_1, \dots, a_d}$ is assigned a weight $k$ being the smaller number of points from $X$ that are cut off by $\aff{a_1, \dots, a_d}$ plus one, that is
    \[  
    \begin{aligned}
    k = 1 + \min \left\{ \# \left\{X \cap \left\{x \in \R^d \colon \frac{\left\langle x, u \right\rangle}{\left\langle a_1, u \right\rangle} > 1 \right\} \right\}, \# \left\{X \cap \left\{x \in \R^d \colon \frac{\left\langle x, u \right\rangle}{\left\langle a_1, u \right\rangle} < 1 \right\} \right\} \right\}.
    \end{aligned}
    \]
This weight corresponds to the level $k$ at which $\aff{a_1, \dots, a_d}$ forms a boundary of some $H \in \half(k)$. Altogether, in the dual space we obtain a graph \begin{enumerate*}[label=(\roman*)] \item whose $\binom{n}{d}$ vertices correspond to observational hyperplanes, and \item each such vertex is given a weight corresponding to the level $k$ at which it contributes to $\HD_k(X)$. \item Two vertices are joined by an edge if and only if the corresponding hyperplanes share a ridge. \end{enumerate*} We call this the \emph{dual graph} of $X$.

For the following analysis of $G$ we involve tools from the theory of graphs. An \emph{induced subgraph} $G'$ of $G$ is a graph formed by a subset $V'$ of the set of vertices $V$ of $G$, and all the edges of $G$ that join pairs of points from $V'$. An induced subgraph $G'$ of $G$ is a called a \emph{clique} if each pair of vertices of $G'$ is connected by an edge. It is a \emph{maximal clique} if no other vertex of $G$ can be appended to $G'$ so that the resulting induced subgraph is a clique. 

The dual graph $G$ describes all the relevant combinatorial structure of our arrangement of points $X$ --- the observational hyperplanes, the fact whether two such hyperplanes are neighbouring, and also the fact whether they share a ridge. Indeed, to identify common ridges, note that two observational hyperplanes share a ridge $I$ if and only if their dual vertices are connected in $G$ by an edge. Any given ridge $I$ is shared by $n-(d-1)$ hyperplanes, meaning that the corresponding $n-(d-1)$ vertices of $G$ form a clique in $G$. Since no other observational hyperplane contains $I$, that clique is maximal. Consequently, each ridge $I$ is equivalent with a maximal clique of $G$, and each such maximal clique has $n-d+1$ vertices.

%
%

\subsection{Algorithm~\texorpdfstring{\aC{}}{A2}: The counterexample revisited} 
We illustrate the relevance of the dual graph by returning to our example from Section~\ref{sec:not exact}. The dual graph of the set of $n=12$ points $X$ in $\R^3$ has $\binom{n}{d} = \binom{12}{3} = 220$ vertices, each connected with exactly $d\,(n-d) = 27$ other vertices. The weights of the vertices range from $k = 1$ for those corresponding to hyperplanes forming the boundary of $\conv{X}$, to $k=\lfloor(n-d)/2\rfloor + 1 = 5$. The latter bound follows from e.g. the ham sandwich theorem \cite{Elton_Hill2011, Matousek2003}. 

\begin{figure}[htpb]
    \centering
    $\vcenter{\hbox{\fbox{\includegraphics[height=.2\textwidth]{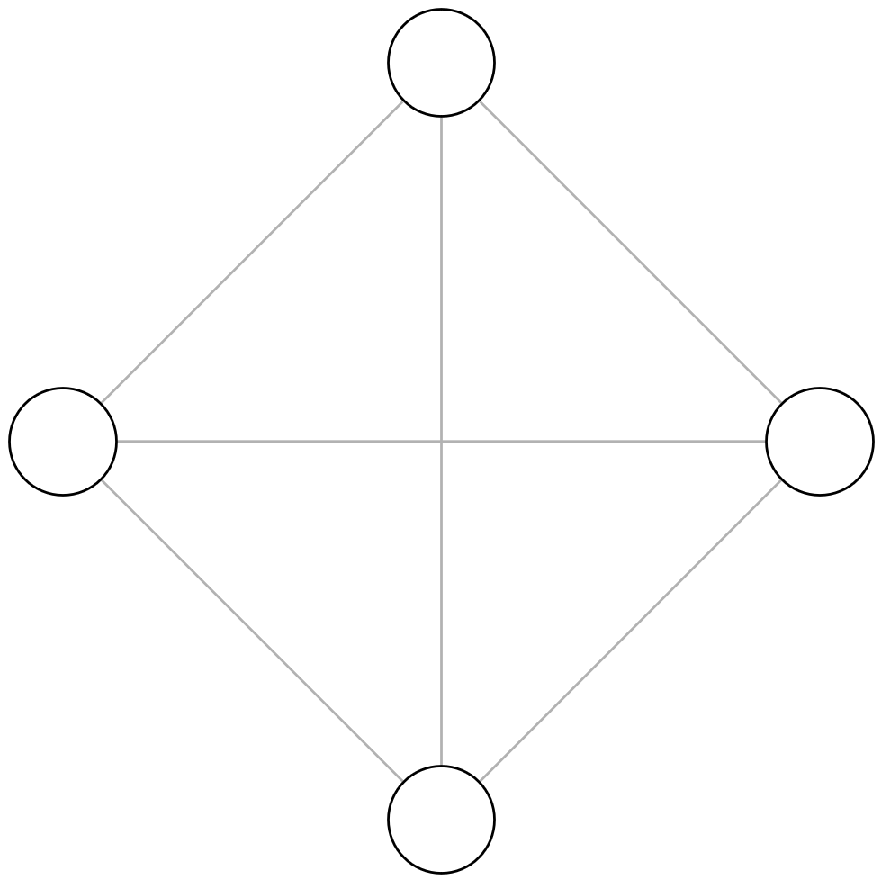}}}}$ \quad    
    $\vcenter{\hbox{\fbox{\includegraphics[height=.3\textwidth]{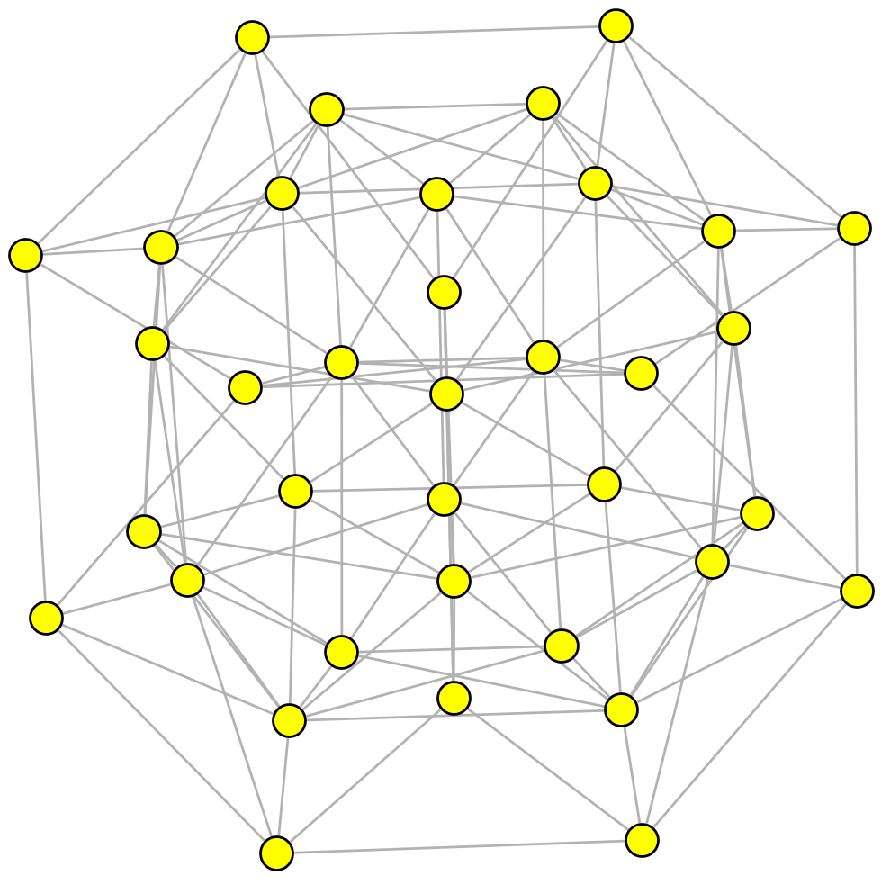}}}}$ \quad
    $\vcenter{\hbox{\fbox{\includegraphics[height=.35\textwidth]{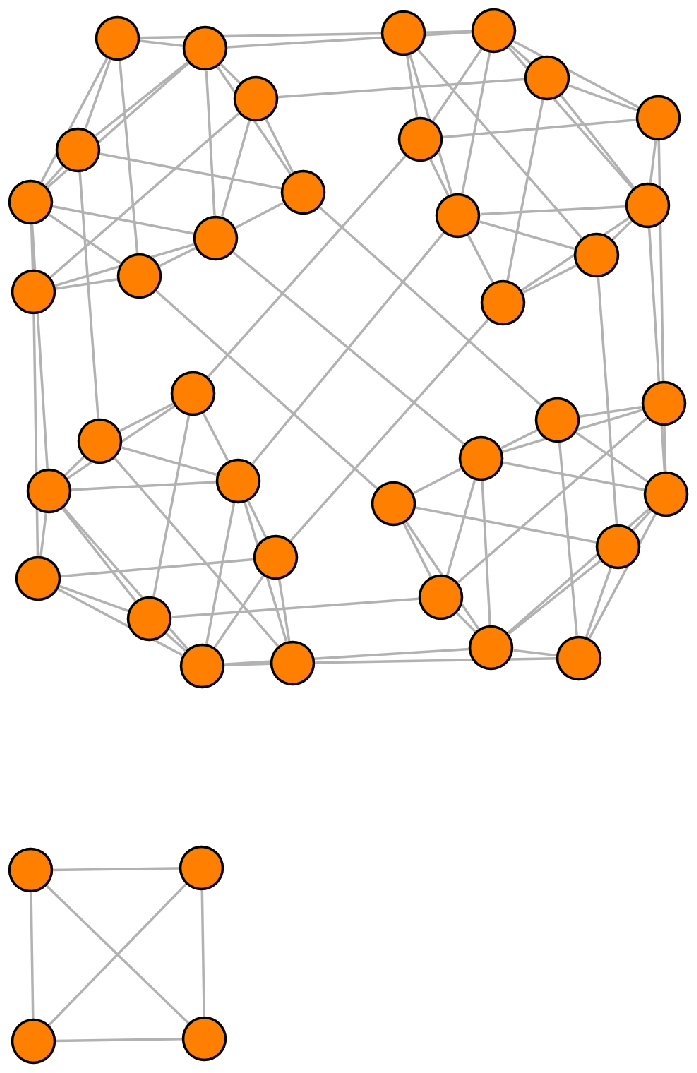}}}}$ 
    \caption{Induced monochrome subgraphs $G_k$ (that is, subgraphs of vertices of different weights) of the dual graph of $X$ from Section~\ref{sec:not exact}: $k = 1$ in the left hand panel, $k = 2$ in the middle panel, and $k = 3$ in the right hand panel.}
    \label{fig:dual subgraph}
\end{figure}

Its complexity makes direct visualisation of $G$ cumbersome; the complete dual graph of $X$ can be found in the online Supplementary Material. It turns out that it is more insightful to restrict to the induced monochrome (that is single-weight) subgraphs of $G$. Those are displayed in Figure~\ref{fig:dual subgraph}, where we can see the subgraphs $G_1$, $G_2$, and $G_3$ of hyperplanes that cut off $0$, $1$, or $2$ points from $X$, respectively. The vertices of the graph $G_k$ represent all the elements of $\half(k)$. We identify two connected components of $G_3$ corresponding to hyperplanes from $\half(3)$. The smaller connected component relates to the four faces of the blue tetrahedron from Figure~\ref{fig:12points}. We have seen in Section~\ref{sec:not exact} that these relevant hyperplanes are never reached using Algorithm~\aA{}. Now we argue that this difficulty is fundamental, and even a substantially expanded procedure~\textbf{\textsf{RidgeSearch}} in the spirit of Algorithm~\aA{} still fails to find all halfspaces from $\half(3)$. Consider the following variation of Algorithm~\aA{} where, to search for all elements of $\half(k)$ (that is, all vertices of $G_{k}$), we make the following amendments:
    \begin{enumerate}[label=\textbf{(A$_{\arabic*}^2$)},ref=\upshape{\textbf{(A$_{\arabic*}^2$)}}]
        \item \label{A12} In \ref{A1}, only a \emph{single ridge} $I$ of points on the convex hull of $X$ is considered. Instead, we take \emph{all ridges} of all facets of $\conv{X}$ in the initial step.
        \item In Algorithm~\aA{}, only \emph{two hyperplanes} relevant at level $k$ that share a given ridge $I$ are pushed into the queue $\mathcal Q$ in \textbf{Step~2(d)} in \cite{Liu_etal2019}. We push \emph{all hyperplanes} relevant at level $k$ that share $I$ into the search queue $\mathcal Q$, as we did in~\ref{A2}.
        \item \label{A32} Finally, in \ref{A3}, after $H \in \half(k)$ that contains a \emph{given ridge $I$ of the facet of $\conv{X}$} is found, each \emph{point of $X$ that was cut off} by $H$ from $X$ is combined with points of the ridge $I$ to form new ridges pushed to $\mathcal Q$. In our setting, we expand the last set of ridges considerably. We add to $\mathcal Q$ all the ridges of all hyperplanes that are \emph{relevant at all levels $\ell = 2, 3, \dots, k$} and share \emph{at least a single ridge with any facet} of $\conv{X}$.
    \end{enumerate}
The steps~\ref{A12}--\ref{A32} can be summarized more succinctly in dual terms. To search for the central region at level $k$, all ridges of
    \begin{itemize}
        \item[\ding{228}] all vertices of $G_1$, and
        \item[\ding{228}] all vertices of $G_2, \dots, G_{k}$ that share an edge with any vertex from $G_1$,
    \end{itemize} 
are added to $\mathcal Q$ in the initial step~\ref{S1} of~\textbf{\textsf{RidgeSearch}}. Our expanded program then proceeds by~\ref{S2} and~\ref{S3}. In plain words, for the graph $G$ it means that instead of through ridges in $\mathcal Q$ we launch a search through relevant hyperplanes represented as the vertices of $G$ (and all ridges of those hyperplanes). Instead of a queue of ridges $\mathcal Q$ we take a queue $\mathcal V$ of vertices of $G$. After initialising $\mathcal V$, in~\ref{S2} all connected components of $G_k$ that share an edge with a vertex from the initial set $\mathcal V$ are found, and their vertices are added to $\mathcal V$. Our algorithm is exact if the resulting $\mathcal V$ contains all vertices of $G_k$.

The expanded procedure based on \ref{A12}--\ref{A32} and~\textbf{\textsf{RidgeSearch}} is called Algorithm~\aC{} for brevity. The search for all relevant hyperplanes is easy to visualise in the dual graph $G$ of $X$. The results of applying both Algorithms~\aA{} and~\aC{} to the dataset $X$ from Section~\ref{sec:not exact} for $k=3$ are found in Figure~\ref{fig:AlgA}. None of these algorithms recovers all relevant hyperplanes of $X$; they both miss the same blue tetrahedron from Figure~\ref{fig:12points}. 

\begin{figure}[htpb]
    \centering
    \fbox{\includegraphics[width=.35\textwidth]{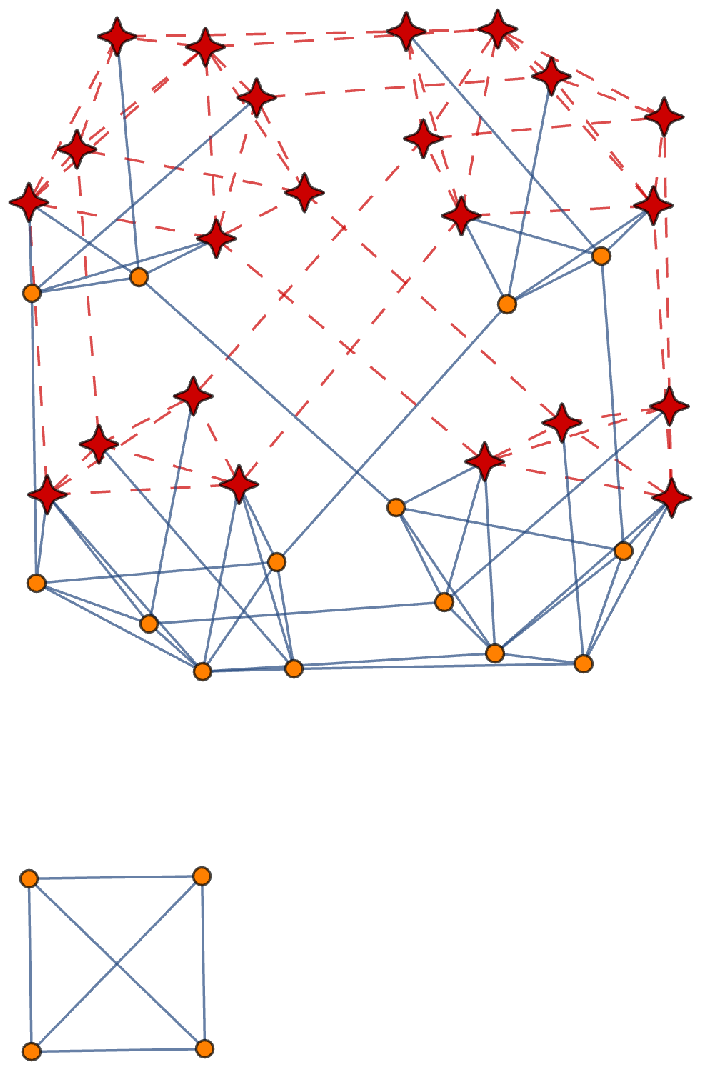}} \quad
    \fbox{\includegraphics[width=.35\textwidth]{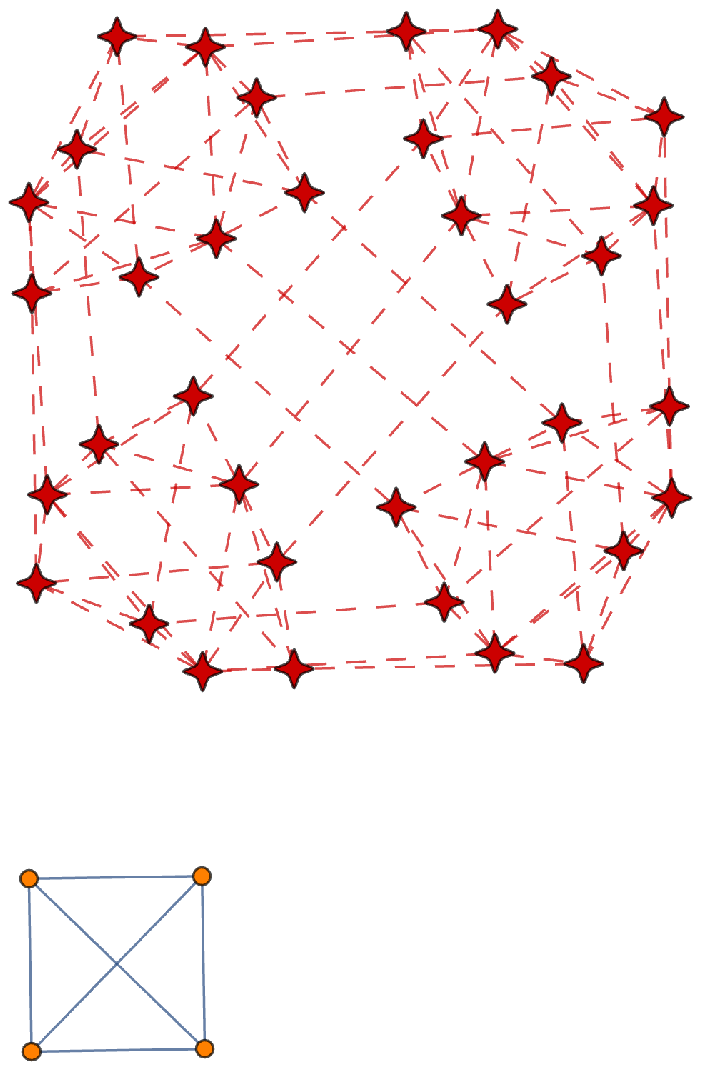}} 
    \caption{The induced subgraph $G_3$ in the example from Section~\ref{sec:not exact}. In both graphs, the vertices highlighted as diamonds are those vertices of $G_3$ that are found in the initial step of~\textbf{\textsf{RidgeSearch}} using Algorithm~\aA{} (left hand panel) and its expanded version Algorithm~\aC{} (right hand panel). Neither of these algorithms detects the tetrahedron of blue points from Figure~\ref{fig:12points}, which corresponds to the smaller connected component of $G_3$ in these figures.}
    \label{fig:AlgA}
\end{figure}


%
%
%
%

\appendix

\section{Proofs of theoretical results}

\subsection{Proof of Theorem~\ref{thm:orbitspassthrough}}   \label{sec:orbits}

For a relevant halfplane $H \in \half(k)$ with $L = \aff{x_i, x_j} = \partial H$ we denote by $\nu_L = \nu_{\aff{x_i,x_j}} \in\Sph[1]$ the outer unit normal vector of $H$. In other words, the vector $\nu_L = \nu_{\aff{x_i,x_i}}$ is orthogonal to $L = \aff{x_i, x_j}$ and heads towards the open halfplane that contains exactly $k-1$ points from $X$. For unit vectors $u_1, u_2 \in \Sph[1]$ define the closed spherical interval $[u_1, u_2]$ to be the shorter arc of unit vectors $\Sph[1]$ delimited by $u_1$ and $u_2$, including its endpoints. In the case $u_1 = - u_2$ any of the two half-circles between $u_1$ and $u_2$ can be chosen as $[u_1,u_2]$. The open spherical interval $(u_1,u_2)$ is defined analogously.

For the proof of Theorem~\ref{thm:orbitspassthrough} we need the following lemma.

\begin{lemma}
\label{lm:closerpoint}
Let $L = \partial H_L$ for $H_L \in \half(k)$ and let $\aff{x_i, x_j}$ be another relevant line at level $k$ such that $x_i, x_j \in H_L \setminus L$. That is, both points $x_i, x_j$ are in the interior of the halfplane $H_L$. Then there exists a relevant line $L^\prime$ at level $k$ that passes through either $x_i$ or $x_j$ such that $\nu_{L^\prime} \in (\nu_{\aff{x_i, x_j}}, \nu_{L})$.
\end{lemma}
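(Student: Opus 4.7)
The plan is to produce $L'$ by continuously rotating a line about $x_j$ (or $x_i$, after relabelling) so that its outer normal sweeps the shorter open arc from $\nu_{\aff{x_i,x_j}}$ to $\nu_L$, tracking the number of data points on one side and invoking a discrete intermediate value argument.

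Set up Cartesian coordinates placing $\aff{x_i,x_j}$ on the $x$-axis with $\nu_{\aff{x_i,x_j}}=(0,1)$, $x_i$ at the origin and $x_j=(d,0)$ with $d>0$. A quick reduction rules out $\nu_L=\nu_{\aff{x_i,x_j}}$: if $L\parallel \aff{x_i,x_j}$ with the same orientation, then since $x_i\in\intr(H_L)$ the open $\nu_L$-halfplane of $\aff{x_i,x_j}$ would strictly contain the open $\nu_L$-halfplane of $L$ (which has $k-1$ points) together with the two data points on $L$, giving at least $k+1>k-1$ cut-off points, contradicting that $\aff{x_i,x_j}$ is relevant at level $k$. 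After relabelling $x_i\leftrightarrow x_j$ if necessary I may assume $\langle x_i,\nu_L\rangle\leq \langle x_j,\nu_L\rangle$. The strict inequality is the main case; equality forces the degenerate configuration $\nu_L=-\nu_{\aff{x_i,x_j}}$ (parallel lines with opposite orientations), whose combinatorial consistency forces $n\geq 2k+2$ and can be handled uniformly below by choosing the appropriate half-circle as the arc.

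Now parametrise a rotation of a line about $x_j$ by $\theta\in[0,\theta^*]$ so that the outer normal $\phi_j(\theta)$ of the rotating line $\ell_j(\theta)$ satisfies $\phi_j(0)=\nu_{\aff{x_i,x_j}}$, $\phi_j(\theta^*)=\nu_L$, and $\phi_j$ sweeps the chosen shorter arc; the terminal line $\ell_j(\theta^*)=:L_q$ is the line through $x_j$ parallel to $L$. Let $c_j(\theta)$ denote the number of data points strictly on the $\phi_j(\theta)$-side of $\ell_j(\theta)$. Two bookkeeping inequalities drive the argument. \emph{First}, linearising the position of $x_i$ with respect to the rotating line at $\theta=0$ gives $\langle x_i-x_j,\phi_j(\theta)\rangle\approx \theta\,\langle x_i-x_j,v\rangle$, where $v$ is the unit tangent to the arc at $\nu_{\aff{x_i,x_j}}$; and the sign of $\langle x_i-x_j,v\rangle$ coincides with the sign of $\langle x_i-x_j,\nu_L\rangle\leq 0$, so $x_i$ leaves the line into the $-\phi_j$-side. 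Since by general position no other data point changes sides at $\theta=0^+$, one obtains $c_j(0^+)=k-1$. \emph{Second}, decomposing the set $\{y:\langle y,\nu_L\rangle>\langle x_j,\nu_L\rangle\}$ into the $k-1$ points in the open $\nu_L$-halfplane of $L$, the two data points on $L$, and the (non-negative) number of data points strictly between $L_q$ and $L$, one obtains $c_j(\theta^*)\geq k+1$. The same decomposition specialises to $c_j(\theta^*)=n-k-1\geq k+1$ in the degenerate case.

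The conclusion now follows by a discrete intermediate value argument. By general position every line through $x_j$ contains at most one further data point, so $c_j$ is piecewise constant on $[0,\theta^*]$ with unit jumps at finitely many event angles at which $\ell_j$ passes through an additional data point $p$. Since $c_j$ starts at $k-1$ and ends at $\geq k+1$, the first event $\theta_0$ at which $c_j$ takes the value $k$ must be an outward jump from $k-1$ to $k$, where some data point $p$ crosses from the $-\phi_j$-side onto the $+\phi_j$-side. At $\theta_0$ the line passes through $x_j$ and $p$ and strictly cuts off $c_j(\theta_0-)=k-1$ points on its $\phi_j(\theta_0)$-side, so it is a relevant line $L'$ at level $k$ with outer normal $\nu_{L'}=\phi_j(\theta_0)$. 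The terminal inequality $c_j(\theta^*)\geq k+1>k$ forces $\theta_0<\theta^*$ strictly, so $\nu_{L'}$ lies in the open arc $(\nu_{\aff{x_i,x_j}},\nu_L)$, as required. The main technical obstacle is making the strict-count bookkeeping precise at events, in particular excluding the possibility $\theta_0=\theta^*$ when $L_q$ happens to be observational (that is, when some $p\neq x_j$ lies in $L_q$); a short case distinction on whether the event at $\theta^*$ is outward, inward, or nonexistent shows that $c_j(\theta^*-)\geq k+1$ in every subcase, so $c_j$ must already hit the value $k$ strictly before $\theta^*$.
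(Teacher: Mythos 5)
Your proposal is correct and is essentially the paper's own proof: both rotate a line about the data point of $\{x_i,x_j\}$ closer to $L$ so that its outer normal sweeps the shorter arc from $\nu_{\aff{x_i,x_j}}$ to $\nu_L$, track the number of strictly cut-off points as a step function that starts at $k-1$ and reaches at least $k+1$ at the end of the arc, and extract the sought relevant line $L'$ from the first angle where the count moves to $k$, by a discrete intermediate-value argument. The only differences are cosmetic — explicit coordinates, a more spelled-out treatment of the antipodal degenerate case $\nu_L=-\nu_{\aff{x_i,x_j}}$, and a more careful check that the located event angle is strictly interior to the arc — all of which the paper handles implicitly via its choice of half-circle and its $\sup$-based formulation.
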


\begin{proof}
Assume without loss of generality that $x_i$ is closer or equally far from $L$ compared to $x_j$. Let $\nu \colon [0, 1] \to [\nu_{\aff{x_i, x_j}}, \nu_{L}]$ be any continuous bijection such that $\nu(0) = \nu_{\aff{x_i, x_j}}$ and $\nu(1) = \nu_{L}$. For $s \in [0,1]$, let $L_s$ be the line with normal vector $\nu(s)$ passing through $x_i$ and let $h(s)$ be the number of points in the open halfplane $G_s$ defined by $L_s$ for which $\nu(s)$ is an inner normal vector. This situation is visualised in Figure~\ref{fig:Lemma2}.

\begin{figure}
    \centering
    \includegraphics[width=.40\textwidth]{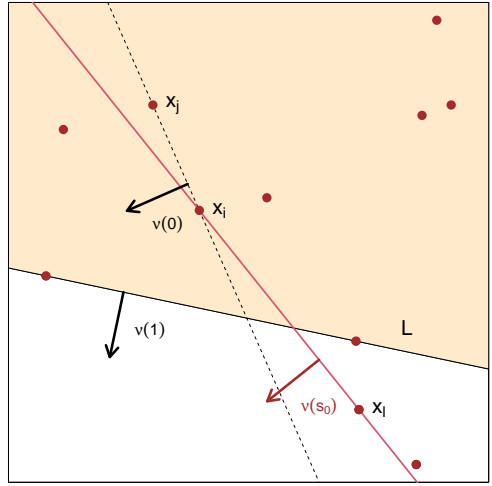}
    \caption{The situation in the proof of Lemma~\ref{lm:closerpoint} for $k=3$. By rotating the dashed line $\aff{x_i, x_j}$ around the point $x_i$ we necessarily reach another relevant line $L^\prime = \aff{x_i, x_l}$ (brown line) whose unit normal $\nu(s_0)$ lies in $\Sph[1]$ strictly between $\nu(0)$ and $\nu(1)$, and at the same time $L'$ cuts off exactly $k-1$ points from $X$.}
    \label{fig:Lemma2}
\end{figure}
	
Then, $h\colon [0,1] \to \{0,1,\dots\}$ is a lower semi-continuous step function with steps of size~$1$, because we suppose that $X$ lies in general position. A step of $h$ occurs exactly at each $s$ such that $L_s$ passes through another point from $X$. For $\varepsilon_1 > 0$ small enough, $h(s) = k - 1$ for all $s \in [0,\varepsilon_1)$. Furthermore, since $L_1$ is parallel with $L$ and hence $G_1$ contains both $L$ and the $k-1$ points from $X$ cut off by $L$, we have $h(1) \ge 2+k-1 = k+1$. Take $s_0 = \sup \{s \in [0,1] \colon h(s) = k - 1 \}$. Because $h(s) = k-1$ for $s$ small enough, we have $s_0 > 0$. We have $h(1) \ge k+1$, $h(s) \neq k-1$ for all $s \in (s_0, 1]$, and $h$ has only steps of size $1$. Necessarily, there must exist a point $x_l \in X$ such that $x_l \in L_{s_0}$, $h(s_0) = k - 1$, and $\nu(s_0) \in (\nu_{\aff{x_i, x_j}}, \nu_{L})$. So, $L^\prime = \aff{x_i, x_l}$ is a relevant line satisfying the claim.
%
\end{proof}

We are now ready to prove Theorem~\ref{thm:orbitspassthrough}. 

\smallskip
\noindent
\textbf{Main proof of Theorem~\ref{thm:orbitspassthrough}.} Let $L = \aff{x_i, x_j} = \partial H$ and $V = U \cup \{x_i, x_j \}$. Choose any $G \in O$ with $\aff{x_{i_0}, x_{i_1}} = \partial G$, where $x_{i_1}$ is no further from $L$ than $x_{i_0}$. If $x_{i_0}, x_{i_1} \notin V$, by Lemma~\ref{lm:closerpoint}, there exists a point $x_{i_2}$ such that $\aff{x_{i_1}, x_{i_2}}$ is a relevant line at level $k$ and $\nu_{\aff{x_{i_1}, x_{i_2}}} \in (\nu_{\aff{x_{i_0}, x_{i_1}}}, \nu_{L})$. If $x_{i_2} \notin V$, continue to construct $x_{i_3}$ and so on in a similar manner. Since the angle between $\nu_{\aff{x_{i_h}, x_{i_{h+1}}}}$ and $\nu_L$ decreases strictly with $h = 0,1,\dots$, the points in the sequence $x_{i_0}, x_{i_1}, x_{i_2}, \dots$ cannot repeat, and because there is only a finite number of points in $X$, the construction cannot continue to infinity. Consequently, there is some $h$ such that $x_{i_h} \in V$. The point $x_{i_h}$ was obtained in such a way that the relevant halfplane $H_{\aff{x_{i_{h-1}}, x_{i_h}}} \in \half(k)$ with boundary $\aff{x_{i_{h-1}}, x_{i_h}}$ is reachable from the relevant halfplane $G$ with boundary $\aff{x_{i_0}, x_{i_1}}$.
		
If $x_{i_{h}} \in \{x_i, x_j \}$, then $H$ is reachable from $G$, and thus $G$ is in the orbit of $H$. Otherwise, $x_{i_{h}} \in U$, meaning $\aff{x_{i_{h-1}}, x_{i_h}}$ is a relevant line passing through $x_{i_{h}} \in U$ whose relevant halfplane $H_{\aff{x_{i_{h-1}}, x_{i_h}}}$ lies in the same orbit as $G$, as we wanted to show.

\subsection{Proof of Lemma~\ref{lm:connected}}  \label{section:connected}

Since $X$ is in general position, facets of $C$ are $(d-1)$-simplices \cite[page~8]{Ziegler1995} and two facets are neighbouring if and only if they share $d-1$ vertices of $C$. Without loss of generality suppose that the origin is an interior point of $C$, and consider the polar $C^\circ$ of $C$ from in Section~\ref{sec:polarity}. We know that $C^\circ$ is also a convex polytope \cite[Lemma~2.4.5]{Schneider2014}, its vertices correspond to facets of $C$ and two vertices are connected by an edge if and only if the corresponding facets of $C$ are neighbours \cite[Section~2.4]{Schneider2014}. All pairs of vertices of a polytope are connected by a sequence of edges \cite[Section~3.5]{Ziegler1995}.

\subsection{Proof of Theorem~\ref{theorem:k2}}  \label{section:k2}

The proof of the main theorem is obtain by combining several auxiliary lemmas. Our first observation is that any two relevant hyperplanes that cut off the same single point of $X$ lie in the same orbit of $\half(2)$.

\begin{lemma}
	\label{lm:samepoint}
If two relevant halfspaces in $\half(2)$ cut off the same point of $X$, they are mutually reachable in $\half(2)$.
\end{lemma}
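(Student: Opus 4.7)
The plan is to identify halfspaces in $\half(2)$ cutting off a fixed $x \in X$ with facets of $C_x := \conv{X \setminus \{x\}}$ visible from $x$, and then to establish connectedness of the visible region in a spirit similar to Lemma~\ref{lm:connected}. If $H \in \half(2)$ cuts off $x$, then $\partial H$ carries $d$ points of $X \setminus \{x\}$, the remaining $n - d - 1$ points of $X \setminus \{x\}$ lie in the interior of $H$, and $x$ lies in the complementary open halfspace. General position ensures that $F_H := \partial H \cap C_x$ is a $(d-1)$-simplex facet of $C_x$, and that $\partial H$ strictly separates $x$ from $C_x$, so $F_H$ is a facet of $C_x$ visible from $x$ (in particular $x$ must be a vertex of $\conv{X}$). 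Conversely every facet of $C_x$ visible from $x$ arises in this way from a unique $H \in \half(2)$ cutting off $x$, and two such halfspaces $H, H'$ are neighbouring in $\half(2)$ precisely when $F_H$ and $F_{H'}$ share a $(d-2)$-dimensional face of $C_x$, that is, a ridge of $X$.

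The lemma thus reduces to the following polytope-combinatorial fact: the facets of $C_x$ visible from the external point $x$ form a connected subgraph of the ridge-adjacency graph of $\partial C_x$. I would prove it by central projection from $x$. Fix a hyperplane $\pi$ such that $C_x$ lies strictly between $x$ and $\pi$, and project $\partial C_x$ radially from $x$ onto $\pi$. This is a bijection from the visible part of $\partial C_x$ onto the compact convex set $S := \pi \cap \conv{\{x\} \cup C_x}$, carrying each visible facet to a $(d-1)$-simplex in $\pi$ and each ridge shared by two visible facets to the common $(d-2)$-face of the corresponding simplices. The visible facets therefore induce a polytopal subdivision of the convex set $S$, whose facet-adjacency graph is clearly connected; pulling back along the projection yields connectedness of the visible facets of $C_x$.

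The main obstacle is this last connectedness step for the subdivision, which is where the argument leaves the purely combinatorial setting of Lemma~\ref{lm:connected}. An attractive alternative more in line with the style of Lemma~\ref{lm:closerpoint} would be a direct rotational argument: rotate $\partial H_1$ continuously into $\partial H_2$ through a one-parameter family of hyperplanes each separating $x$ from $X \setminus \{x\}$, and record the discrete moments at which the rotating hyperplane passes through a new $d$-tuple of points in $X \setminus \{x\}$. Each such event supplies a common ridge between two halfspaces in $\half(2)$, producing a chain of neighbouring halfspaces from $H_1$ to $H_2$; either route then establishes the asserted mutual reachability.
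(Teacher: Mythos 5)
Your argument is correct and takes a genuinely different route from the paper's. Both begin with the same reduction: halfspaces in $\half(2)$ cutting off a fixed vertex $x$ of $\conv{X}$ correspond bijectively to facets of $C_x := \conv{X \setminus \{x\}}$ visible from $x$, with neighbouring halfspaces matching ridge-sharing facets, so the claim reduces to connectedness of the visible facets under ridge-adjacency. The paper establishes that connectedness through polarity: the visible facets become the vertices of $C_x^\circ$ lying in the open halfspace $\{u \colon \langle x, u \rangle > 1\}$, and connectedness of the induced subgraph is exactly the lemma at the heart of Balinski's theorem, which the paper cites. You instead project the visible portion of $\partial C_x$ radially from $x$ onto a far hyperplane, obtaining a polytopal subdivision of a convex $(d-1)$-dimensional region whose top-dimensional cells are the images of visible facets and whose codimension-one walls are the images of shared ridges, and then use connectedness of the cell-adjacency graph of such a subdivision. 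The two constructions are polar to one another --- the cutting halfspace in the dual corresponds precisely to your projection apex $x$ --- but yours replaces the nontrivial graph-theoretic Balinski lemma with the more elementary generic-segment argument for subdivisions, and is arguably more self-contained. The step you flag as the remaining obstacle is indeed that subdivision-connectedness fact; it is standard (a segment between interior points of two cells, perturbed to avoid codimension-two strata, crosses only shared walls) and once supplied your proof is complete. Your alternative rotational sketch would need more care for $d>2$, since a one-parameter family of separating hyperplanes need not pass through observational hyperplanes one ridge at a time; the projection route is the cleaner one.
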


\begin{proof}
Let $x_1$ be the point cut off by both halfspaces and consider the convex hull $C_1$ of $X \setminus \{x_1 \}$. Without loss of generality we may suppose that the origin lies in the interior of $C_1$. Each of the facets $F_1, \dots, F_m$ of $C_1$ contains exactly $d$ points from $X \setminus \{x_1 \}$. Using the dual construction \eqref{conjugate face} we find to each $F_j$ a vertex $\hat{F}_j$ of $C_1^\circ$. By duality considerations, for any pair $\hat{F}_j$ and $\hat{F}_k$ we have that the hyperplanes $\aff{F_j}$ and $\aff{F_k}$ are neighbouring in $X \setminus \{x_1 \}$ (that is, sharing $d-1$ points of $X \setminus \{x_1\}$) if and only if $\hat{F}_j$ and $\hat{F}_k$ share an edge on the boundary of the polar polytope $C_1^\circ$.

The boundary $\partial H$ of each relevant halfspace $H \in \half(2)$ that cuts off $x_1$ is determined by $d$ points from $X \setminus \{x_1 \}$. Thus, it contains a facet of $C_1$, and this facet corresponds to a vertex of $C_1^\circ$ by \eqref{conjugate face}. Denote by $U = \left\{ \hat{F}_1, \dots, \hat{F}_m \right\}$ the set of all vertices of $C_1^\circ$, and consider its subset $U'$ of those vertices that correspond to faces of $C_1$ determined by relevant hyperplanes that cut off $x_1$. Vertices $u \in U'$ are specific among those from $U$ by their property that we have $\left\langle x_1, u \right\rangle > 1$. This is because for $u \in U'$ corresponding to a relevant halfspace $H$ that cuts off $x_1$ we can write by \eqref{P by duality} that 
	$\R^d \setminus H = \left\{ x \in \R^d \colon \left\langle x, u \right\rangle > 1 \right\}$	
and $x_1 \notin H$. Consider the set $H_1 = \left\{ x \in \R^d \colon \left\langle x_1, x \right\rangle > 1 \right\}$. In the dual space, this is an open halfspace with inner normal $x_1$ at the distance $x_1/\left\Vert x_1 \right\Vert^2$ from the origin. This halfspace is well defined, as we assumed that in the primal space, the origin is contained in $C_1$ and $x_1 \notin C_1$, i.e. $\left\Vert x_1 \right\Vert > 0$. Using this interpretation, we see that we can write $U' = U \cap H_1$. In particular, $U'$ can be obtained by intersecting $U$ with an open halfspace in the dual space. 

We have reduced our problem to the question of whether any two vertices from $U' = U \cap H_1$ can be joined by a sequence of edges of the polytope $C_1^\circ$ joining two points from $U'$. That is known to be true by a result from the theory of graphs called the Balinski theorem \cite[proof of the main Theorem]{Balinski1961}, stated explicitly also in \cite[Section~3]{Sallee1967}.
\end{proof}

Having established that all relevant halfspaces in $\half(2)$ cutting off the same vertex of $C$ belong to a single orbit, we now explore the structure of orbits in $\half(2)$ with respect to the vertices of $C$ that are cut off. To do this, we need to introduce another notion of reachability, called \emph{vertex-reachability}, suited for the vertices of a convex polytope instead of for halfspaces. This should not be confused with the relation of vertices of $C$ being connected via a sequence of edges of $C$. Our definition is given analogously to the reachability of halfspaces introduced in Section~\ref{sec:notations}. We say that two vertices $x$, $x'$ of $C$ are \emph{(mutually) vertex-reachable} if \begin{enumerate*}[label=(\roman*)] \item they are \emph{vertex-neighbouring}, meaning there exists a set $V$ of $d-1$ vertices of $C$ different from $x$ and $x'$ such that both $\conv{V \cup \{x \}}$ and $\conv{V \cup \{x' \}}$ are facets of $C$, or \item there exists a vertex $x''$ of $C$ that is vertex-reachable from both $x$ and $x'$.\end{enumerate*} In other words, vertices $x \ne x'$ are vertex-reachable if and only if there exists a finite sequence of vertices $\left\{ x_j \right\}_{j=1}^J$ of $C$ such that $x_1 = x$, $x_J = x'$, and for each $j = 1, \dots, J-1$ the vertices $x_j$ and $x_{j+1}$ lie on two neighbouring facets of $C$, but do not lie on the same facet of $C$. For example, in the top left panel of Figure~\ref{figure:motivation}, vertices $A$, $C$, $E$, and $G$ are mutually vertex-reachable, but $A$ is not vertex-reachable from $B$, $D$, $F$, or $H$.


\begin{lemma}\label{lm:relations}
Let $H_1, H_2 \in \half(2)$ be two relevant halfspaces that cut off points $x_{j_1}$ and $x_{j_2}$, respectively, from $X$. If $x_{j_1}$ and $x_{j_2}$ are mutually vertex-reachable, then $H_1$ and $H_2$ are mutually reachable in $\half(2)$.
\end{lemma}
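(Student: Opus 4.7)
The plan is two-step. First, I would reduce to the case where $x_{j_1}$ and $x_{j_2}$ are vertex-neighbouring, using transitivity of the mutual-reachability relation on $\half(2)$ together with Lemma~\ref{lm:samepoint}. Specifically, along any vertex-reachability chain $x_{j_1} = x^{(0)}, x^{(1)}, \dots, x^{(\ell)} = x_{j_2}$, every intermediate vertex $x^{(i)}$ of $C$ admits at least one halfspace in $\half(2)$ cutting it off: take any facet $F$ of $C$ containing $x^{(i)}$ and tilt its supporting hyperplane $\aff{F}$ slightly around the $(d-2)$-flat spanned by the remaining $d-1$ vertices of $F$ until it first hits another data point. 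Applied to consecutive vertex-neighbouring pairs along the chain, the single-pair case and Lemma~\ref{lm:samepoint} then patch everything into one orbit.

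Second, in the vertex-neighbouring case there is by assumption a set $V$ of $d-1$ vertices of $C$ such that $F_i = \conv{V \cup \{x_{j_i}\}}$ is a facet of $C$ for $i=1,2$. I would exhibit explicit halfspaces $H'_1, H'_2 \in \half(2)$ cutting off $x_{j_1}$ and $x_{j_2}$ whose boundaries both contain the ridge $V$, making them neighbouring in $\half(2)$. For bookkeeping I would fix a two-dimensional affine complement of $\aff{V}$ and let $\pi$ denote the projection of $\R^d$ onto it along $\aff{V}$. Under $\pi$ the set $V$ collapses to a single basepoint $v^\ast$; the hyperplanes $\aff{F_1}, \aff{F_2}$ project to the two supporting lines of $\pi(C)$ through $v^\ast$, passing through $\pi(x_{j_1})$ and $\pi(x_{j_2})$ respectively. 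By general position of $X$, the remaining $n-d-1$ projections of points of $X \setminus (V \cup \{x_{j_1}, x_{j_2}\})$ lie strictly in the open wedge delimited by these two lines (they cannot fall on $\aff{F_1}$ or $\aff{F_2}$, since those support $C$ and already contain $d$ points of $X$), and their angular positions around $v^\ast$ are pairwise distinct (otherwise some $d+1$ points of $X$ would be affinely dependent).

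Letting $y_1$ be the point whose projection has the smallest angle past $\pi(x_{j_1})$ and $y_{m-1}$ the one closest in angle to $\pi(x_{j_2})$, the observational hyperplanes $\aff{V \cup \{y_1\}}$ and $\aff{V \cup \{y_{m-1}\}}$ separate exactly $x_{j_1}$, respectively exactly $x_{j_2}$, from the rest of $X$, yielding $H'_1, H'_2 \in \half(2)$ whose boundaries share the ridge $V$. Combined with Lemma~\ref{lm:samepoint}, which places $H_1, H'_1$ in a common orbit and $H_2, H'_2$ in a common orbit, transitivity puts $H_1$ and $H_2$ in the same orbit. The main technical nuisance I anticipate is the edge case $y_1 = y_{m-1}$, which can occur only when $n = d+2$: there $H'_1$ and $H'_2$ are the two halfspaces bounded by the \emph{same} observational hyperplane, and they still share the ridge $V$, so the argument goes through unchanged.
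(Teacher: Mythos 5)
Your proof is correct and follows essentially the same route as the paper: reduce by transitivity and Lemma~\ref{lm:samepoint} to the vertex\mbox{-}neighbouring case, then exhibit $H_1', H_2' \in \half(2)$ cutting off $x_{j_1}$ and $x_{j_2}$ respectively whose boundaries both contain the ridge $V$, making them neighbouring, and conclude by transitivity. Where the paper obtains $H_1'$ (resp.\ $H_2'$) abstractly as the second facet of $\conv{X\setminus\{x_{j_1}\}}$ (resp.\ $\conv{X\setminus\{x_{j_2}\}}$) incident to the ridge $V$, your two-dimensional projection along $\aff{V}$ with the angular first\mbox{-}hit selection of $y_1$ and $y_{m-1}$ produces the same halfspaces more concretely and, as a bonus, makes explicit why those hyperplanes cut off exactly one point --- a fact the paper asserts without detail.
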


\begin{proof}
If $x_{j_1} = x_{j_2}$, then $H_1$ and $H_2$ are mutually reachable in $\half(2)$ by Lemma~\ref{lm:samepoint}. Otherwise, it suffices to consider the case $x_{j_1}$ and $x_{j_2}$ vertex-neighbouring with $x_{j_1} \ne x_{j_2}$, as the rest follows by induction. Let $x_{i_1}, \dots, x_{i_{d-1}}$ be the points from $X\setminus\left\{x_{j_1},x_{j_2}\right\}$ such that both convex hulls $F_1 = \conv{x_{j_1}, x_{i_1}, \dots, x_{i_{d-1}}}$ and $F_2 = \conv{x_{j_2}, x_{i_1}, \dots, x_{i_{d-1}}}$ are facets of $C$. Then, in the convex hull $C_1$ of $X \setminus \{x_{j_1} \}$ there exists a vertex $x_{h_1}$ such that $\conv{x_{h_1}, x_{i_1}, \dots, x_{i_{d-1}} } \ne F_2$ is a new facet of $C_1$. The corresponding halfspace $H_1^\prime \in \half(2)$ whose boundary is $\aff{x_{h_1}, x_{i_1}, \dots, x_{i_{d-1}}}$ that cuts off $x_{j_1}$ is relevant and reachable from $H_1$ in $\half(2)$ by Lemma~\ref{lm:samepoint}. Similarly there exists a relevant halfspace $H_2^\prime \in \half(2)$ with boundary $\aff{x_{h_2}, x_{i_1}, \dots, x_{i_{d-1}}}$, for some $x_{h_2} \in X \setminus \{ x_{j_2} \}$, that cuts off $x_{j_2}$ from $X$, and thus is reachable from $H_2$ in $\half(2)$. Since $H_1^\prime$ and $H_2^\prime$ share the ridge $\left\{ x_{i_1}, \dots, x_{i_{d-1}} \right\}$ of $X$, they are neighbouring, and thus mutually reachable in a single step in $\half(2)$. Since the relation of reachability of halfspaces is transitive and symmetric, also $H_1$ and $H_2$ are mutually reachable in $\half(2)$.
\end{proof}

\begin{lemma}\label{lm:vertices}
Let $F$ be a facet of $C$. Then, any vertex of $C$ is vertex-reachable from at least one of the vertices of $F$.
\end{lemma}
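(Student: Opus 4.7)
The plan is to proceed by induction on the combinatorial distance $m$, in the facet-adjacency graph of $C$, from the given facet $F$ to the nearest facet of $C$ containing the target vertex $y$. Two facets are declared adjacent here when they share a ridge (that is, when they have $d-1$ common vertices of $C$); this graph is precisely the $1$-skeleton of the polar polytope $C^\circ$, and it is connected by classical polytope theory, the same fact already invoked in the proof of Lemma~\ref{lm:connected}. Consequently every vertex of $C$ sits at a finite distance $m(y)\geq 0$ from $F$ in this graph, which makes the induction well-founded.

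The base case $m(y)=0$ is immediate: $y$ itself is a vertex of $F$, and the claim holds with $y$ playing the role of the starting vertex (under the reflexive reading of vertex-reachability implicit in the paper). For the inductive step, fix $m\geq 1$, assume the statement for all vertices at distance at most $m-1$, and choose a shortest dual path $F=F_0,F_1,\dots,F_m$ with $y\in F_m$. By minimality of the path, $y\notin F_i$ for any $i<m$, because otherwise the prefix $F_0,\dots,F_i$ would reach a facet containing $y$ in fewer steps. Since $X$ is in general position, each $F_i$ is a $(d-1)$-simplex, so $\lvert F_{m-1}\cap F_m\rvert=d-1$; hence $y$ is the unique vertex of $F_m\setminus F_{m-1}$, and there is a unique vertex $v\in F_{m-1}\setminus F_m$. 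Taking $V=F_{m-1}\cap F_m$ in the definition, the identities $\conv{V\cup\{y\}}=F_m$ and $\conv{V\cup\{v\}}=F_{m-1}$ witness that $y$ and $v$ are vertex-neighbouring.

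Since $v\in F_{m-1}$, the distance from $F$ to a facet containing $v$ is at most $m-1$, so the inductive hypothesis yields a vertex $x\in F$ with $v$ vertex-reachable from $x$. Transitivity of vertex-reachability, which is built into its recursive definition, then gives $y$ vertex-reachable from $x$, completing the induction. The main delicate point of the argument is the minimality step that forces $y\notin F_{m-1}$, so that $y$ is literally the ``new'' vertex appearing at the last step of the dual path and thereby automatically becomes the unique element of $F_m\setminus F_{m-1}$. Beyond this bookkeeping, the proof uses nothing more than general position (which makes facets simplicial, so neighbouring facets differ in exactly one vertex on each side) and the polytope duality recalled in Section~\ref{sec:polarity}.
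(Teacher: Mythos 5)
Your proof is correct and is essentially the paper's own argument: an induction along the facet-adjacency graph of $C$ (connected by Lemma~\ref{lm:connected}), where general position makes every facet a $(d-1)$-simplex so that each step along the dual path introduces exactly one new vertex, which is vertex-neighbouring to the one it replaces. Your explicit shortest-path bookkeeping is a slightly more careful rendering of the same induction; note that $y\notin F_{m-1}$ already follows directly from $m(y)=m$ together with $F_{m-1}$ being at distance $m-1$ from $F$, so invoking minimality of the chosen path is not strictly necessary.
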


\begin{proof}
Let $x_{i_1}, \dots, x_{i_d}$ be the vertices of $F$. If $F^\prime$ is a neighbouring facet of $F$ (they share $d-1$ vertices), all vertices of $F^\prime$ are trivially vertex-reachable from some vertex of $F$. 
Therefore, by induction, if $F''$ is a facet reachable from $F$ in the sense of Lemma~\ref{lm:connected}, then all vertices of $F''$ are vertex-reachable from some vertex of $F$. By Lemma~\ref{lm:connected}, all facets are reachable from $F$ and so the claim holds for all vertices of $C$.
\end{proof}

Putting together all our previous observations, we are now ready to prove Theorem~\ref{theorem:k2}. 

\smallskip
\noindent
\textbf{Main proof of Theorem~\ref{theorem:k2}.} For $k = 1$ the claim follows trivially from Lemma~\ref{lm:connected}. We therefore consider only $k=2$. The initial set of ridges $\mathcal Q$ of Algorithm~\aA{} includes all ridges formed by vertices of 
    \begin{itemize}
        \item a relevant halfspace in $\half(2)$ determined by $\aff{x_{i_1}, \dots, x_{i_d}}$ cutting off a point $x_j$, such that $x_{i_1}, \dots, x_{i_{d-1}}$ are vertices of the convex hull $C$ (step~\ref{A2}), and
        \item all halfspaces in $\half(2)$ that pass through $d-1$ points from $\{x_j, x_{i_1}, \dots, x_{i_{d-1}} \}$ (step~\ref{A3}).
    \end{itemize}  
This includes for each $l \in \{1, \dots, d-1\}$ a halfspace in $\half(2)$ that cuts off $x_{i_l}$.
	
The points $x_j, x_{i_1}, \dots, x_{i_{d-1}}$ are all vertices of a facet of $C$. If $H \in \half(2)$ is a relevant halfspace cutting off $x_h \in X$, by Lemma~\ref{lm:vertices} we know that $x_h$ is vertex-reachable from one of the points $x_j, x_{i_1}, \dots, x_{i_{d-1}}$, say $x_p$. By Lemma~\ref{lm:relations}, $H$ is reachable from the initial relevant halfspace that cuts off $x_p$. Since Algorithm~\aA{} finds all relevant halfspaces reachable from the initial set in step~\ref{S2}, it must find all relevant halfspaces from $\half(2)$.

\subsection{Proof of Theorem~\ref{next level}} \label{sec:next level}

To prove that~\ref{B1} does indeed guarantee exactness, we introduce additional notation. For a subset $A\subset X$ of $k-1$ points we write $\half(A)$ for the collection of all the halfspaces $H\in \half(k)$ that cut off $A$, i.e. that satisfy $X \setminus H = A$. Technically, $\half(A) \subset \half(k)$ depends also on $k$ and this should be emphasized in our notation; we shall not do this because $k$ is, in fact, implicitly present, as $\#A = k - 1$. We start with a simple but useful lemma.

\begin{lemma}  \label{simplex}
Let $S=\{a_1,\dots,a_{d+1}\} \subset \R^d$ be in general position. Denote by $p$ the projection of $a_{d+1}$ into the hyperplane $H = \aff{S\setminus\{a_{d+1}\}}$. Then there exists $j\in\{1,\dots,d\}$ such that $p$ and $a_j$ lie on the same side of the $(d-2)$-dimensional hyperplane inside $H$ determined by the $d-1$ points $S \setminus \{a_j,a_{d+1}\}$.
\end{lemma}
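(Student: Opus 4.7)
The plan is to translate the geometric statement into one about \emph{barycentric coordinates} on the $(d-1)$-simplex formed by $a_1,\dots,a_d$ inside the hyperplane $H$. General position of $S$ guarantees that $a_1,\dots,a_d$ are affinely independent and therefore span $H$, so every point $x\in H$, and in particular the projection $p$, admits a unique representation
\[
    x \;=\; \sum_{j=1}^d \lambda_j(x)\, a_j, \qquad \sum_{j=1}^d \lambda_j(x) \;=\; 1.
\]

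Next I would identify the $(d-2)$-dimensional hyperplanes appearing in the statement in terms of these coordinates. For each $j\in\{1,\dots,d\}$, the affine hull of $S\setminus\{a_j,a_{d+1}\}=\{a_1,\dots,a_d\}\setminus\{a_j\}$ is precisely the zero-locus $\{x\in H\colon \lambda_j(x)=0\}$, and its two open sides inside $H$ are characterised by the sign of $\lambda_j$. Because $\lambda_j(a_j)=1>0$, the vertex $a_j$ always lies in the half-space $\{\lambda_j>0\}$, and hence $p$ and $a_j$ lie on the same side of this hyperplane if and only if $\lambda_j(p)>0$.

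Consequently the lemma reduces to the purely algebraic assertion that at least one of the barycentric coordinates $\lambda_1(p),\dots,\lambda_d(p)$ is strictly positive. This is immediate from $\sum_{j=1}^d \lambda_j(p)=1>0$, since not all summands can be nonpositive. I would then simply pick any index $j$ with $\lambda_j(p)>0$ and declare it to be the $j$ claimed by the lemma.

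I do not foresee any real obstacle: the only points that need to be spelled out are that the general-position hypothesis does indeed make barycentric coordinates on $H$ well defined, and that ``the same side of the hyperplane'' is captured by the sign of the corresponding barycentric coordinate. The projection structure of $p$ plays no role in the argument — exactly the same reasoning works for \emph{any} $p\in H$ — so the lemma is really a combinatorial fact about a single $(d-1)$-simplex inside $H$, merely dressed up via the ambient embedding into $\R^d$.
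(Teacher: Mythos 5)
Your proof is correct and is essentially the same as the paper's: the paper observes that the $d$ closed halfspaces of $H$, each bounded by the affine hull of one face and containing the opposite vertex $a_j$, cover $H$, which is exactly the statement that for every $x\in H$ some barycentric coordinate $\lambda_j(x)$ is nonnegative. Your barycentric-coordinate formulation, using $\sum_j\lambda_j(p)=1>0$ to obtain a strictly positive coordinate, makes that covering step fully explicit and also cleanly justifies the strict ``same side'' conclusion, as well as your correct observation that $p$ being a projection is irrelevant.
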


\begin{proof}
The set of $d$ points $S\setminus\{a_{d+1}\}$ lies in general position. Thus, it forms vertices of a $(d-1)$-dimensional simplex inside the hyperplane $H$. Consider now only the $(d-1)$-dimensional (affine) space $H$. Every vertex of the simplex lies inside a halfspace in $H$ bounded by the $(d-2)$-dimensional affine subspace (that is, a hyperplane in $H$) containing the remaining vertices. 
These $d$ halfspaces cover $H$, and in particular, one of them contains $p$. We found a $(d-2)$-dimensional hyperplane in $H$ determined by $d-1$ points from $S\setminus\{a_{d+1}\}$ that bounds a halfspace containing $p$ and a single point from $S \setminus \{ a_{d+1} \}$, as desired.
\end{proof}

For the next lemma we need to consider also the metric structure of $\R^d$. Writing $\left\Vert x \right\Vert$ for the Euclidean norm of $x \in \R^d$, we define the distance of a point $x \in \R^d$ from a hyperplane $G \subset \R^d$ by $\dist(x;G) = \min_{y \in G} \left\Vert x - y \right\Vert$.

\begin{lemma}   \label{iterative step}
Let $H \in \half(k+1)$ be determined by points $a_1,\dots, a_d \in X$, and denote by $A = X \setminus H$ the set of $k$ points cut off from $X$ by $H$. Let $a \in A$ be any point of minimum distance to the hyperplane $\partial H$, that is, any $a \in A$ that satisfies $\dist(a;\partial H) = \min_{x \in A} \dist(x;\partial H)$. Then there exists an observational halfspace $H_1 \ne H$ that shares a ridge with $H$ in its boundary and either \begin{enumerate*}[label=(\roman*)] \item $H_1 \in \half(k)$, or \item $H_1$ cuts off from $X$ the same $k$ points as $H$ does, that is, $H_1 \in \half(A) \cap \half(k+1)$, and at the same time $\dist(a;\partial H_1) < \dist(a;\partial H)$. \end{enumerate*}
\end{lemma}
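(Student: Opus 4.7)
The plan is to construct $H_1$ by rotating the hyperplane $\partial H$ about a carefully chosen ridge, and to take $H_1$ to be the first observational halfspace encountered along the rotation. The subtle point is selecting the ridge correctly: rotating around an arbitrary ridge of $\partial H$ will in general either fail to move $\partial H$ toward $a$, or else move it toward $a$ at the price of sweeping the pivoted vertex $a_j$ into the cut-off set (pushing us up into $\half(k+2)$ instead of staying in $\half(k{+}1)$ or dropping to $\half(k)$). Lemma~\ref{simplex} is precisely the combinatorial input that rules this out, and identifying it as the right tool is the main conceptual step.

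I would first apply Lemma~\ref{simplex} to the $(d+1)$-point set $\{a_1, \dots, a_d, a\}$ to obtain an index $j$ such that the orthogonal projection $p$ of $a$ onto $\partial H$ and the vertex $a_j$ lie on the same side of the $(d-2)$-flat $\aff{R}$ inside $\partial H$, where $R = \{a_1,\dots,a_d\}\setminus\{a_j\}$. I would then parametrise the pencil $\{\partial H(\theta) : \theta \in [0,\pi)\}$ of hyperplanes containing $\aff{R}$, with $\partial H(0) = \partial H$, and write $H(\theta)$ for the closed halfspace on the same side as $H$. The sense of rotation is fixed so that $a_j$ falls into the interior of $H(\theta)$ for $\theta > 0$ small; because $a_j$ and $p$ are on the same side of $\aff{R}$, the same rotation simultaneously brings $\partial H(\theta)$ closer to $a$ rather than further away, which is exactly what Lemma~\ref{simplex} buys us.

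For each $x \in X$ let $f_x(\theta)$ denote the signed distance from $x$ to $\partial H(\theta)$, with positive sign on the cut-off side. Each $f_x$ is a sinusoid, and by construction $f_{a_j}(\theta) < 0$ on $(0,\pi)$, while $f_a$ is strictly decreasing on an initial interval and first vanishes at some $\theta_a \in (0,\pi/2)$. I would then set
\[
\theta^{*} = \min\bigl\{\,\theta > 0 \;:\; f_x(\theta) = 0 \text{ for some } x \in X \setminus R\,\bigr\}.
\]
This minimum is attained because $\theta_a$ is a candidate, and general position of $X$ ensures that $\theta^{*}$ is realised by a unique $x^{*} \in X \setminus R$, with $x^{*} \neq a_j$ since $f_{a_j}$ has no zero in $(0,\pi)$. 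By continuity, no $f_x$ changes sign on $(0,\theta^{*})$, so the cut-off set $X \setminus H(\theta)$ is constantly equal to $A$ on this interval. Setting $H_1 = H(\theta^{*})$, its boundary $\aff{R \cup \{x^{*}\}}$ is observational, shares the ridge $R$ with $\partial H$, and $H_1 \neq H$ because $\theta^{*} > 0$.

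It remains to split into cases according to where $x^{*}$ lies. If $x^{*} \in A$, then at $\theta = \theta^{*}$ the point $x^{*}$ moves onto $\partial H_1$ and hence out of the cut-off set, giving $X \setminus H_1 = A \setminus \{x^{*}\}$ of size $k-1$, so $H_1 \in \half(k)$, which is conclusion~(i). Otherwise $x^{*}$ lay strictly inside $H$, so $X \setminus H_1 = A$ and $H_1 \in \half(A) \cap \half(k+1)$; general position then forces $x^{*} \neq a$, hence $\theta^{*} < \theta_a$, and strict monotonicity of $f_a$ on $[0,\theta_a]$ yields $\dist(a;\partial H_1) = f_a(\theta^{*}) < f_a(0) = \dist(a;\partial H)$, which is conclusion~(ii).
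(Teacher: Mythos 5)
Your proposal is correct and follows essentially the same route as the paper: you use Lemma~\ref{simplex} to select the ridge $R$, rotate $\partial H$ about $R$ in the direction that keeps $a_j$ inside until the first new data point $x^*$ is met, and then split into the two cases according to whether $x^*$ was cut off (giving $\half(k)$) or interior (same cut-off set, strictly smaller distance to $a$). The paper presents the identical construction statically, projecting onto a two-dimensional plane orthogonal to $\aff{R}$ and minimising an angle over the regions $G\setminus H$ and $H\setminus G$, which picks out exactly your first-hit point, so the two arguments coincide.
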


\begin{proof}
We start by applying Lemma~\ref{simplex} with $a_{d+1} = a$. Let $j \in \{1, \dots, d\}$ be the index from Lemma~\ref{simplex}. Denote by $G$ the unique halfspace that satisfies the following three conditions: \begin{enumerate*}[label=(\roman*)] \item the boundary of $G$ is orthogonal to the boundary of $H$ in the sense that the unit normals of $G$ and $H$ are orthogonal; \item $G$ contains the $d-1$ points $M=\{a_1,\dots,a_{j-1},a_{j+1},\dots,a_d\}$ in its boundary; and \item $a_j$ lies in the interior of $G$.\end{enumerate*} Then by Lemma~\ref{simplex} we have that also $a \in G$, because the projection $p$ of $a$ into $\partial H$ and $a_j$ lie on the same side of the projection of $\partial G$ into $\partial H$. Denote by $W$ a two-dimensional plane orthogonal to the $(d-2)$-dimensional affine hull of $M$. Write $o \in W$ for the projection of $M$ onto $W$ and for any other point $x\in\R^d$ denote by $x'\in W$ its projection into $W$. Since $M \subset \partial H$ is orthogonal to $W$, the projection of $H$ into $W$ is a halfplane $H'$ that does not contain $a'$ and its boundary is the line $\aff{o,a'_j}$. Likewise, because $M \subset \partial G$, also the projection of $G$ is a halfplane $G'$ in $W$ that contains $o$ on its boundary and has inner normal $a'_j-o$. Because $a \in G$, we obtain that $a'\in G'$. For any data point $b \in X$ denote by $\theta(b)$ the value of the smaller angle between the lines $\aff{o,b'}$ and $\aff{o,a'_j}$ inside the plane $W$. This whole setup is visualised in Figure~\ref{fig:iterative}.

\begin{figure}[htpb]
    \centering
    \includegraphics[width=.43\textwidth]{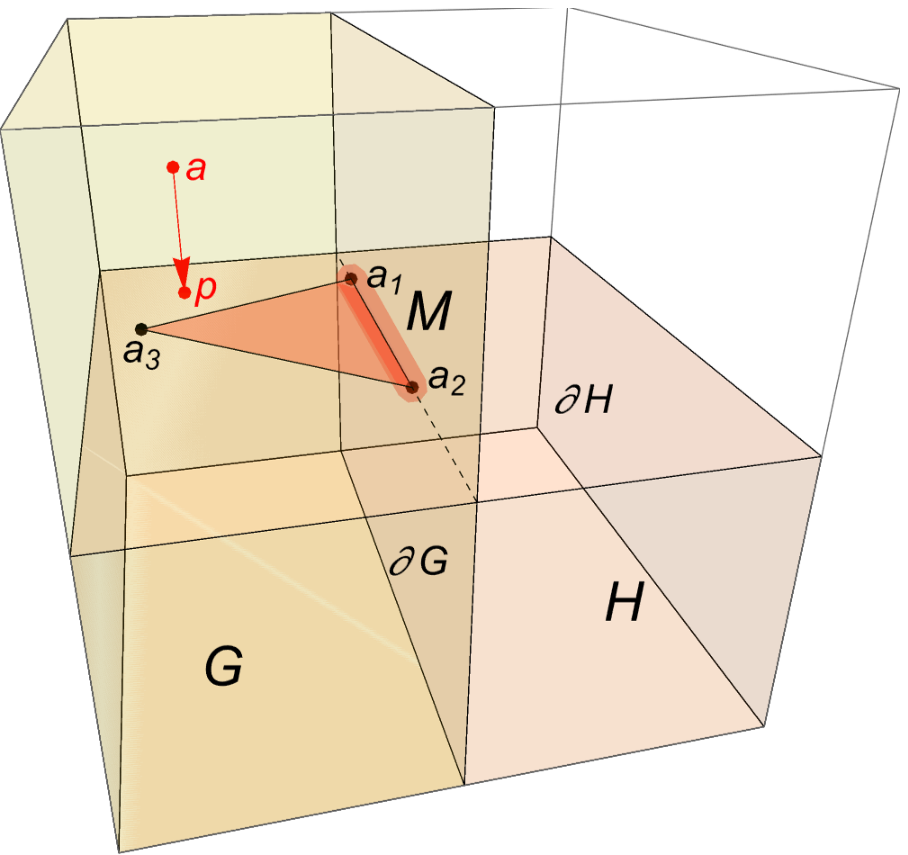} \quad
    \includegraphics[width=.43\textwidth]{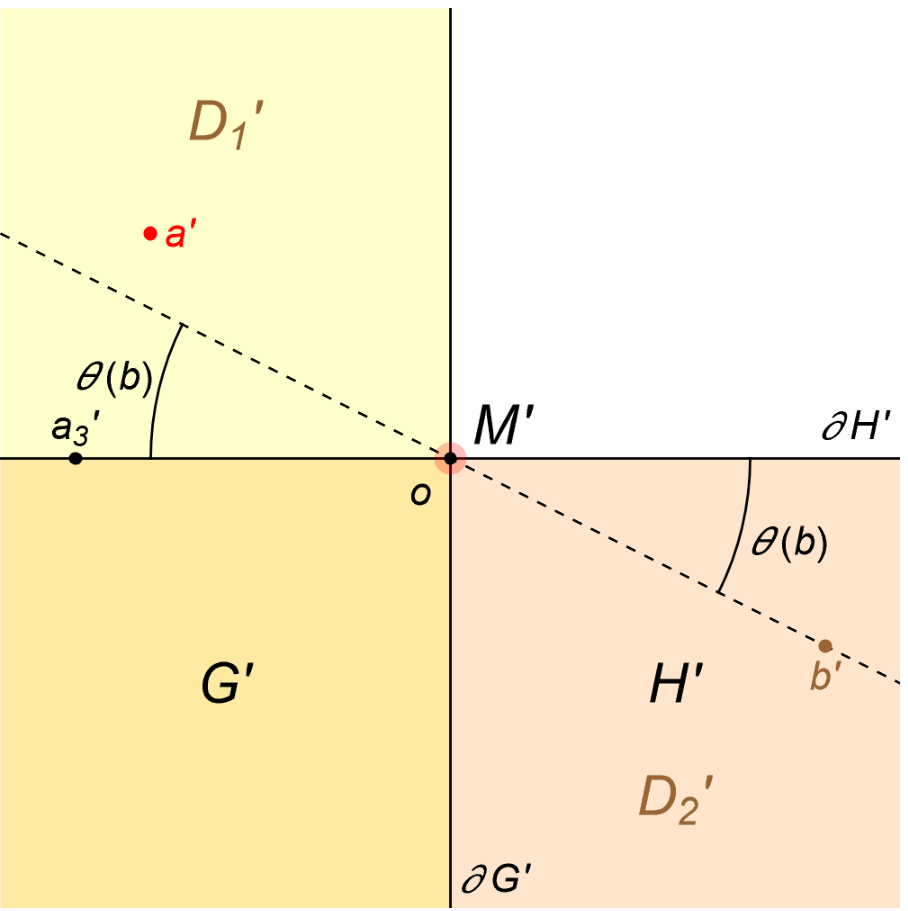} 
    \caption{Visualisation of the proof of Lemma~\ref{iterative step}. In the left panel we see the setup in dimension $d=3$. The boundary plane $\partial G$ of the halfspace $G$ is the vertical plane passing through the ridge $M$. Orthogonal to $\partial G$ we have the boundary plane $\partial H$, which also passes through $M$ and delimits the halfspace $H$ in the lower part of the figure. The index from Lemma~\ref{simplex} is $j=3$; the plane $W$ is any plane with a normal vector $a_1 - a_2$. In the panel on the right hand side we have our setup projected into $W$. The ridge $M$ projects into the singleton $M'=\{o\}$. The regions $D_1$ and $D_2$ are projected to the quadrants given by $D_1'$ and $D_2'$. We search for a point $b \in D \cap X$ that minimizes the value of the angle $\theta(b)$ between the line $\partial H'$ and the line joining $b' \in D_1' \cup D_2'$ with $o$.}
    \label{fig:iterative}
\end{figure}

Denote 
    $D_1 = G \setminus H$, 
    and $D_2 = H \setminus G$,
and set $D = D_1 \cup D_2$. Let $c\in D \cap X$ be the point that satisfies
    \begin{equation*}
    \theta(c)=\min\left\{\theta(b)\colon b \in D \cap X\right\}.    
    \end{equation*}
Such a point is necessarily unique, as $M$, which projects to $o \in W$, already contains $d-1$ points from $X$, and if there were two different points $c_1, c_2$ from $X \setminus M$ with the same angle $\theta(c)$, there would necessarily exist a hyperplane in $\R^d$ passing through $M$ and both $c_1$ and $c_2$ in $\R^d$, which is impossible due to the assumption of $X$ being in general position.

Denote by $H_1$ the halfspace in $\R^d$ with boundary determined by $M\cup\{c\}$ that does not contain $a$ in its interior. Note that $H_1$ is actually a halfspace obtained by rotating $H$ around $M$ in $\R^d$ in the direction that ``keeps" $a_j$ in $H_1$, until its boundary $\partial H_1$ hits the first new point from $X$. Also, $H_1$ obviously shares the ridge $M$ with $H$.

 We distinguish two cases. 
\begin{enumerate*}[label=(\roman*)]
    \item $c\in D_1$: In this case, we know that $c$ is one of the $k$ points in $A$. Consequently, $H_1$ cuts off exactly $k-1$ points from $X$, that is, $H_1 \in \half(k)$, and $H_1 \in \half(A \setminus \{c\})$. 
    \item $c\in D_2$: This situation implies that $H_1$ cuts off the same $k$ points as $H$ does, i.e. $H_1 \in \half(A)$. What remains to be proved is that $\dist(a;\partial H) > \dist(a;\partial H_1)$. To see this, note that because the plane $W$ is parallel to normal vectors of both $\partial H$ and $\partial H_1$, we have that for a given point $x\in\R^d$ the distances $\dist(x;\partial H)$ and $\dist(x;\partial H_1)$ are equal to the distances of the projection of the point $x'\in W$ from the lines $\aff{o,a'_i}$ and $\aff{o,c'}$, respectively, in the plane $W$. Therefore, it is enough to prove that $\dist(a';\aff{o,a'_i}) > \dist(a';\aff{o,c'})$, which follows directly from $\theta(a') > \theta(c)$ and from the fact that $a,c\in D$. 
\end{enumerate*}
\end{proof}

Everything is now ready for the main proof of Theorem~\ref{next level}.

\smallskip
\noindent
\textbf{Main proof of Theorem~\ref{next level}.} Denote by $A = X \setminus H$ the set of $k$ points cut off from $X$ by $H$. We apply Lemma~\ref{iterative step} to the halfspace $H$ and obtain a halfspace $H_1$. If $H_1$ cuts off $k-1$ points from $X$, then we set $H'=H$ and $\widetilde{H}=H_1$. Otherwise, we know that $H_1 \in \half(k+1) \cap \half(A)$ and for $a \in A$ from the statement of Lemma~\ref{iterative step} we have $\min_{x \in A} \dist(x;\partial H) = \dist(a;\partial H) > \dist(a;\partial H_1) \geq \min_{x \in A} \dist(x;\partial H_1)$. We apply again Lemma~\ref{iterative step} to the halfspace $H_1 \in \half(k+1)$ and obtain another halfspace $H_2$. Note that $H_2$ is reachable from $H$ by our construction. Again, if $H_2$ cuts off $k-1$ points, then we set $H'=H_1$ and $\widetilde{H}=H_2$; otherwise, we continue and apply Lemma~\ref{iterative step} iteratively. We obtain a sequence $H_1, H_2, H_3, \dots$. Because in each step we obtain a halfspace with a strictly smaller distance from the closest point of the set $A$, it is not possible for any two halfspaces from the sequence to coincide. Because there are only finitely many observational halfspaces, there must exist an integer $m \geq 2$ such that $H_m \in \half(k)$. Taking $m$ to be the lowest such integer, we set $H'=H_{m-1}$ and $\widetilde{H}=H_m$. Note that $H_{m-1}$ is a relevant halfspace that cuts off $k$ points from $X$ and also is reachable from $H$. The proof is concluded.

\subsection*{Supplementary material.} 
    \begin{itemize}
    \item An updated \proglang{R} package \pkg{TukeyRegion}, version~0.1.5.5 where the novel exact Algorithm~\aB{} is implemented.
    \item A \proglang{pdf} file with an additional Algorithm~\aD{} motivated by an extension of Algorithm~\aA{} with $k=2$. Using the dual graph, we present a dataset where also this possible simplification of Algorithms~\aB{} and~\aCmb{} fails to recover the central region. Further, we propose to use the dual graph for heuristic assessment of the quality of approximation using non-exact algorithms like Algorithms~\aA{},~\aC{} or~\aD{}. 
    \item A \proglang{Mathematica} notebook with functions for computing the dual graph of $X$, containing also interactive visualisations of all the examples provided in this paper.
    \end{itemize}

\subsection*{Acknowledgement}
P.~Laketa was supported by the OP RDE project ``International mobility of research, technical and administrative staff at the Charles University", grant number CZ.02.2.69/0.0/0.0/18\_053/0016976. The work of S.~Nagy was supported by Czech Science Foundation (EXPRO project n. 19-28231X).


\begin{thebibliography}{}

\bibitem[Balinski, 1961]{Balinski1961}
Balinski, M.~L. (1961).
\newblock On the graph structure of convex polyhedra in {$n$}-space.
\newblock {\em Pacific J. Math.}, 11:431--434.

\bibitem[Barber and Mozharovskyi, 2022]{R_TukeyRegion}
Barber, C. and Mozharovskyi, P. (2022).
\newblock {\em TukeyRegion: Tukey region and median}.
\newblock R package version 0.1.5.5.

\bibitem[Donoho and Gasko, 1992]{Donoho_Gasko1992}
Donoho, D.~L. and Gasko, M. (1992).
\newblock Breakdown properties of location estimates based on halfspace depth
  and projected outlyingness.
\newblock {\em Ann. Statist.}, 20(4):1803--1827.

\bibitem[Dyckerhoff and Mozharovskyi, 2016]{Dyckerhoff_Mozharovskyi2016}
Dyckerhoff, R. and Mozharovskyi, P. (2016).
\newblock Exact computation of the halfspace depth.
\newblock {\em Comput. Statist. Data Anal.}, 98:19--30.

\bibitem[Elton and Hill, 2011]{Elton_Hill2011}
Elton, J.~H. and Hill, T.~P. (2011).
\newblock A stronger conclusion to the classical ham sandwich theorem.
\newblock {\em European J. Combin.}, 32(5):657--661.

\bibitem[Genest et~al., 2019]{R_depth}
Genest, M., Mass{\'e}, J.-C., and Plante, J.-F. (2019).
\newblock {\em depth: Nonparametric depth functions for multivariate analysis}.
\newblock R package version 2.1-1.1.

\bibitem[Laketa and Nagy, 2021]{Laketa_Nagy2021}
Laketa, P. and Nagy, S. (2021).
\newblock Reconstruction of atomic measures from their half{\-}space depth.
\newblock {\em J. Multivariate Anal.}, 183:104727.

\bibitem[Liu et~al., 1999]{Liu_etal1999}
Liu, R.~Y., Parelius, J.~M., and Singh, K. (1999).
\newblock Multivariate analysis by data depth: descriptive statistics, graphics
  and inference.
\newblock {\em Ann. Statist.}, 27(3):783--858.

\bibitem[Liu et~al., 2020]{Liu_etal2020}
Liu, X., Luo, S., and Zuo, Y. (2020).
\newblock Some results on the computing of {T}ukey's halfspace median.
\newblock {\em Statist. Papers}, 61(1):303--316.

\bibitem[Liu et~al., 2019]{Liu_etal2019}
Liu, X., Mosler, K., and Mozharovskyi, P. (2019).
\newblock Fast computation of {T}ukey trimmed regions and median in dimension
  {$p>2$}.
\newblock {\em J. Comput. Graph. Statist.}, 28(3):682--697.

\bibitem[Matou\v{s}ek, 2003]{Matousek2003}
Matou\v{s}ek, J. (2003).
\newblock {\em Using the {B}orsuk-{U}lam theorem}.
\newblock Universitext. Springer-Verlag, Berlin.
\newblock Lectures on topological methods in combinatorics and geometry.

\bibitem[Pokorn\'y et~al., 2021]{Pokorny_etal2021}
Pokorn\'y, D., Laketa, P., and Nagy, S. (2021).
\newblock Halfspace depth for general measures: {F}lag halfspaces.
\newblock Under review.

\bibitem[Rousseeuw and Ruts, 1998]{Rousseeuw_Ruts1998}
Rousseeuw, P.~J. and Ruts, I. (1998).
\newblock Constructing the bivariate {T}ukey median.
\newblock {\em Statist. Sinica}, 8(3):827--839.

\bibitem[Rousseeuw and Ruts, 1999]{Rousseeuw_Ruts1999}
Rousseeuw, P.~J. and Ruts, I. (1999).
\newblock The depth function of a population distribution.
\newblock {\em Metrika}, 49(3):213--244.

\bibitem[Rousseeuw et~al., 1999]{Rousseeuw_etal1999}
Rousseeuw, P.~J., Ruts, I., and Tukey, J.~W. (1999).
\newblock The bagplot: A bivariate boxplot.
\newblock {\em Am. Stat.}, 53(4):382--387.

\bibitem[Ruts and Rousseeuw, 1996]{Ruts_Rousseeuw1996}
Ruts, I. and Rousseeuw, P.~J. (1996).
\newblock Computing depth contours of bivariate point clouds.
\newblock {\em Comput. Statist. Data Anal.}, 23(1):153 -- 168.

\bibitem[Sallee, 1967]{Sallee1967}
Sallee, G.~T. (1967).
\newblock Incidence graphs of convex polytopes.
\newblock {\em J. Combinatorial Theory}, 2:466--506.

\bibitem[Schneider, 2014]{Schneider2014}
Schneider, R. (2014).
\newblock {\em Convex bodies: the {B}runn-{M}inkowski theory}, volume 151 of
  {\em Encyclopedia of Mathematics and its Applications}.
\newblock Cambridge University Press, Cambridge, expanded edition.

\bibitem[Struyf and Rousseeuw, 1999]{Struyf_Rousseeuw1999}
Struyf, A. and Rousseeuw, P.~J. (1999).
\newblock Halfspace depth and regression depth characterize the empirical
  distribution.
\newblock {\em J. Multivariate Anal.}, 69(1):135--153.

\bibitem[Tukey, 1975]{Tukey1975}
Tukey, J.~W. (1975).
\newblock Mathematics and the picturing of data.
\newblock In {\em Proceedings of the {I}nternational {C}ongress of
  {M}athematicians ({V}ancouver, {B}. {C}., 1974), {V}ol. 2}, pages 523--531.
  Canad. Math. Congress, Montreal, Que.

\bibitem[Ziegler, 1995]{Ziegler1995}
Ziegler, G.~M. (1995).
\newblock {\em Lectures on polytopes}, volume 152 of {\em Graduate Texts in
  Mathematics}.
\newblock Springer-Verlag, New York.

\bibitem[Zuo and Serfling, 2000a]{Zuo_Serfling2000}
Zuo, Y. and Serfling, R. (2000a).
\newblock General notions of statistical depth function.
\newblock {\em Ann. Statist.}, 28(2):461--482.

\bibitem[Zuo and Serfling, 2000b]{Zuo_Serfling2000c}
Zuo, Y. and Serfling, R. (2000b).
\newblock On the performance of some robust nonparametric location measures
  relative to a general notion of multivariate symmetry.
\newblock {\em J. Stat. Plan. Inference}, 84(1-2):55--79.

\end{thebibliography}

\def\cprime{$'$} \def\polhk#1{\setbox0=\hbox{#1}{\ooalign{\hidewidth
  \lower1.5ex\hbox{`}\hidewidth\crcr\unhbox0}}}

\end{document}